\tikzset{>=spaced stealth'}
\newtheorem{theorem}{Theorem}
\newtheorem{lemma}[theorem]{Lemma}
\newtheorem{proposition}[theorem]{Proposition}
\newtheorem{definition}[theorem]{Definition}
\newtheorem{conjecture}[theorem]{Conjecture}
\newcommand{\OPT}{\mathrm{OPT}}
\renewcommand{\epsilon}{\varepsilon}
\newcommand{\eps}{\varepsilon}
\newcommand{\R}{\mathbb{R}}
\newcommand{\E}{\mathbb{E}}
\newif\ifnotes\notestrue
\renewcommand\paragraph{\@startsection{paragraph}{4}{\z@}%
                                    {1.5ex \@plus1ex \@minus.2ex}%
                                    {-1em}%
                                    {\normalfont\normalsize\bfseries}}
 \definecolor{mygrey}{gray}{0.50}
 \newcommand{\notename}[2]{{\textcolor{red}{\footnotesize{\bf (#1:} {#2}{\bf ) }}}}
 \newcommand{\notename}[2]{{}}
\title{Flow Time Scheduling and Prefix Beck-Fiala}
\author{Nikhil Bansal \and Lars Rohwedder \and Ola Svensson}
\date{}
\begin{document}

\maketitle

\begin{abstract}

We relate discrepancy theory with the classic scheduling problems of minimizing  max flow time and total flow time on unrelated machines. Specifically, we give a general reduction that allows us to transfer discrepancy bounds in the prefix Beck-Fiala (bounded $\ell_1$-norm) setting to 
bounds on the flow time of an  optimal schedule.

Combining our reduction with a deep result proved by Banaszczyk via convex geometry, give  guarantees of $O(\sqrt{\log n})$
 and $O(\sqrt{\log n} \log P)$ for max flow time and total flow time, respectively,   improving upon the previous best guarantees of $O(\log n)$ and $O(\log n \log P)$.
 Apart from the improved guarantees, the reduction motivates seemingly easy versions of prefix discrepancy questions: {any} constant bound on prefix Beck-Fiala where vectors have sparsity two (sparsity one being trivial)  would already yield tight guarantees for both max flow time and total flow time. While known techniques solve this case when the entries take values in $\{-1,0,1\}$, we show that they are unlikely to transfer to the more general $2$-sparse case of bounded $\ell_1$-norm.

\end{abstract}

\thispagestyle{empty}
\clearpage
\newpage
\setcounter{page}{1}

\section{Introduction}

In this paper we formally relate flow time scheduling with prefix discrepancy, yielding new improved bounds on classic  scheduling problems and new directions in discrepancy. 

The scheduling problems that we consider are in the most general \emph{unrelated machine model}: The input consists of  a set $J$  of $n$ jobs, a set $M$ of $m$ machines, processing times $(p_{ij})_{i\in M, j\in J}$, and release times $(r_j)_{j\in J}$. The machines are unrelated in the sense that the processing time  $p_{ij} \geq 0$ of a job $j\in J$ can arbitrarily depend on the machine $i\in M$. 
 A schedule processes each job $j$  on a selected machine $i$ 
for a total time $p_{ij}$ after its release time $r_j$. A machine can process at most one job at any time but we do allow for preemption, i.e.,  a job can be preempted and resumed at a later stage (on the same machine)\footnote{This is a necessary assumption to get any meaningful guarantees for the total flow time objective. For maximum flow time, one can always transform a preemptive non-migratory schedule into a non-preemptive one
(that is, non-preemptiveness comes for free).}. 
When the schedule is clear from the context, we denote by
$F_j$ the flow time of a job, which is the duration from its release time until it is completed.
Two central and well-studied objectives are to minimize maximum flow time, $\max_{j\in J} F_j$, and to minimize total flow time,  $\sum_{j\in J} F_j$.

Flow time objectives are notoriously difficult and it remains a major open question to understand the approximability  of unrelated machine scheduling under both the max flow time and total flow time objectives. 
The best upper bounds are by Bansal and Kulkarni~\cite{DBLP:conf/stoc/BansalK15}, who obtained the approximation guarantees $O(\log n)$ and $O(\log n \log P)$ for max flow time and total flow time, respectively. 
Here $P$ denotes the ratio between the largest and smallest (finite) processing time, which can be upper bounded by $\text{poly}(n)$ using standard arguments.  The approach of~\cite{DBLP:conf/stoc/BansalK15} is based on an iterative rounding scheme of the natural linear programming (LP) relaxations and, as we further elaborate on below, these techniques are unlikely to lead to better guarantees. At the same time, the best known hardness of approximation results say that it is \textsf{NP}-hard to approximate the max flow time objective  better than a factor $3/2$~\cite{DBLP:journals/mp/LenstraST90} and the total flow time objective within a factor better than $O(\log P/\log \log P)$~\cite{DBLP:conf/icalp/GargK06,DBLP:conf/isaac/GargKM08}. 
%We remark that  it is in fact hard to approximate total flow time better than $O(\log P/\log \log P)$ in the much simpler identical machine model where  a job $j$ has the \emph{same} processing time across all machines~\cite{DBLP:conf/isaac/GargKM08}. 
In summary,  the best known guarantees are roughly a factor $\log n$ away from the known hardness results.

We now explain the difficulties and limitations of current techniques by focusing on the max flow time objective but much of what is said also applies to total flow time.  %The hardness of $3/2$ for maximum flow time is actually a consequence of  
Maximum flow time generalizes another classic scheduling problem: makespan minimization where every job is released at time $0$ and one wants to minimize the maximum completion time.  
In a seminal work, Lenstra, Shmoys, and Tardos~\cite{DBLP:journals/mp/LenstraST90} gave
a beautiful $2$-approximation algorithm for this special case (and proved the lower bound of $3/2$ which also remains the best lower bound for max flow time). Their algorithm is based on the insight that any extreme point solution to the natural LP relaxation has  few fractionally assigned jobs, 
%a fact that they prove by using that the linear programming
as the relaxation has relatively few constraints. 
They then show that these fractional jobs can be distributed among the machines so that each machine receives at most one additional job, leading to the approximation guarantee of $2$. %The proof that any extreme point solution is sparse relies on the fact that %they prove by using 
%follows from the fact that  and these jobs can be distributed among the machines so that each machine receives at most one additional job. The proof that any extreme point solution is sparse relies on 
%the linear programming relaxations have relatively few constraints. 

%On a high level, Bansal and Kulkarni~\cite{DBLP:conf/stoc/BansalK15} extend this approach to the max flow time objective. However, a major issue that they have to overcome is that extreme points are no longer sparse. To appreciate this, 
\paragraph{An illustrative example for max flow.}
However, the generalization to max flow becomes much harder.
The following example is illustrative.
Consider a makespan instance $I$ 
where the optimal solution has makespan $T$ and each machine also has load $T$.
Now consider a max flow time instance where a copy of $I$ is released
at times $T,2T,3T,\dotsc,tT$. Note that the optimal max flow is still $T$. However,
in order to find a $c$-approximation, not only should the solutions to each sub-instance $I$ be $c$-approximations, but we also have to
ensure that the error in these instances does not accumulate over time. Otherwise, the jobs released late will be delayed and incur immense flow times.
%If for some machine
%many of the sub-instance solutions put too much load on it, these errors add up and
%the flow time of jobs released late becomes huge. 

In fact, for a $c$-approximation we need that
in \emph{every interval} of time on each machine the total error is only $(c-1)T$. This requires the LP relaxation for max flow time to have a constraint for every time interval, which increases the number of constraints by a polynomial factor compared to the relaxation for makespan. Extreme points are therefore less  
sparse  and each step of the rounding is only able to integrally assign half the jobs. This  naturally leads to an iterative rounding procedure that is repeated $O(\log n)$ times (to assign all jobs) and the increased approximation guarantee.

%The task of scheduling jobs to balance the load of machines is intuitively related to discrepancy where we wish to color elements/vectors evenly. Moreover, one can see that discrepancy questions in the Beck-Fiala setting face similar technical challenges as those explained for  the scheduling problems.
\paragraph{Prefix Discrepancy.}
There is a close connection between discrepancy theory and the problem of rounding fractional solutions \cite{DBLP:journals/ejc/LovaszSV86}.
Not surprisingly, similar difficulties to those stated above arise in discrepancy problems when bounding the discrepancy over intervals, due the accumulation of error.
%We first discuss the classic discrepancy setting

In the typical discrepancy setting, we are given a collection of vectors 
$v^{(1)}, v^{(2)} , \ldots, v^{(n)} \in \mathbb{R}^m$ and the goal is to find signs $\epsilon_1,\epsilon_2,\dotsc,\epsilon_n\in\{-1, 1\}$ such that the $\ell_\infty$-norm of the signed sum  
$\epsilon_1 v^{(1)} + \epsilon_2 v^{(2)} + \cdots + \epsilon_n v^{(n)}$
is as small as possible.
In a seminal  work,
 Beck and Fiala \cite{BeckFiala-DAM81} showed the following general result\footnote{Typically this result is stated for vectors satisfying both $\|v^{(i)}\|_1 \leq t$ and $\|v^{(i)}\|_\infty \leq 1$. 
 Beck and Fiala proved a guarantee of $2t$ in this setting and a major open problem is to improve this bound to $O(\sqrt{t})$. However, we will only focus on vectors with bounded  $\ell_1$-norms.}.
 Given an arbitrary collection of vectors $v^{(1)}, v^{(2)} , \ldots, v^{(n)} \in \mathbb{R}^m$ of bounded $\ell_1$-norm $\| v^{(j)} \|_1\leq 1$, there always exist signs  $\epsilon_1,\epsilon_2,\dotsc,\epsilon_n\in\{-1, 1\}$ such that 
\begin{equation*}
    \lVert \epsilon_1 v^{(1)} + \epsilon_2 v^{(2)} + \cdots + \epsilon_n v^{(n)} \rVert_\infty \le C \ ,
\end{equation*}
where $C$ can actually chosen to be $2$, i.e.,~independent of the number $n$ of vectors and the dimension $m$. The proof of this result is also based on iterated rounding and in fact very closely related to the proof in \cite{DBLP:journals/mp/LenstraST90} for makespan scheduling.

%they also rely on a linear program with few constraints to argue that any extreme point must  color vectors to $1$ or $-1$. 
%Now recall that a major difficulty in flow time scheduling compared to makespan scheduling is that we need  to bound the error for every time interval. 

Let us pursue this connection further, 
and consider the setting  where we want low discrepancy $\|\sum_{i\in I} \epsilon_i v^{(i)}\|_\infty $  for any consecutive subset $I\subseteq \{1, \ldots, n\}$ of indices. This is equivalent up to a factor $2$ to the \emph{prefix Beck-Fiala} problem where given any set of vectors $v^{(1)}, v^{(2)}, \ldots, v^{(n)} \in \mathbb R^m$ of bounded $\ell_1$-norm $\|v^{(j)}\|_1 \leq 1$, we wish to find  signs $\epsilon_1,\epsilon_2\dotsc,\epsilon_n\in\{-1, 1\}$ satisfying for every prefix $k=1, 2, \ldots, n$ that
\begin{equation*}
\lVert \epsilon_1 v^{(1)} + \epsilon_2 v^{(2)} + \cdots + \epsilon_k v^{(k)} \rVert_\infty \le C \ .
\end{equation*}
Interestingly, one runs into the same problem in adapting the proof technique of Beck and Fiala to this prefix version, as in going from makespan to max flow time scheduling.
%the increased number of constraints makes the extreme points of the linear program less sparse, which in turn causes  a logarithmic loss in the guarantee. 
That is, the linear algebraic techniques give a $C= O(\log n)$ bound on the prefix Beck-Fiala discrepancy. 

However, in contrast to flow time scheduling, more powerful techniques are known to yield better bounds on the prefix Beck-Fiala discrepancy. 
Specifically, Banaszczyk~\cite{DBLP:journals/rsa/Banaszczyk12} developed an ingenious
technique using deep ideas from convex geometry, that allowed him, among other things, to bound the prefix Beck-Fiala discrepancy by $O(\sqrt{\log n})$.
Interestingly, we show that these techniques can be transferred to flow time scheduling to obtain interesting new results. Conversely, this connection also leads to interesting new questions in discrepancy theory.

\subsection{Results}
Our main result is a general reduction that  allows us to transfer the  techniques from discrepancy to flow time scheduling.
\begin{theorem}
\label{thm:main}
    If the discrepancy of the prefix Beck-Fiala problem is bounded by $C$, then integrality gaps of the standard LP relaxations are upper bounded by $O(C)$ and $O(\min\{\log n, \log P\} \cdot C)$ for max flow time and total flow time, respectively.
\end{theorem}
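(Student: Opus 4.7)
The plan is to round the standard time-indexed LP relaxations by reducing the rounding step for max flow time to a prefix Beck--Fiala instance, and to extend the reduction to total flow time via geometric bucketing.

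For max flow time, fix the LP optimum $T^*$. I would start from a fractional solution in which $x_{ij}^t > 0$ only when $p_{ij} \le T^*$ (standard pruning), and first simplify the fractional assignment $y_{ij} = \sum_t x_{ij}^t/p_{ij}$ using Lenstra--Shmoys--Tardos style iterative rounding. After a constant blowup in window loads, this produces a solution in which each remaining fractional job $j$ has support on exactly two machines $\{i_1(j), i_2(j)\}$, and where the bipartite support graph has the usual near-forest structure. Each fractional job then reduces to a single binary choice: assign $j$ integrally to $i_1(j)$ or to $i_2(j)$.

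Next, I would encode each binary decision as a sign $\epsilon_j \in \{-1,+1\}$ and build the $2$-sparse vector $v^{(j)} \in \mathbb{R}^m$ whose only nonzero entries are $\pm p_{i_1(j),j}/(2T^*)$ and $\mp p_{i_2(j),j}/(2T^*)$. Since $p_{ij} \le T^*$ on the support, $\|v^{(j)}\|_1 \le 1$, so these are valid prefix Beck--Fiala inputs. I would order the jobs by an appropriate time parameter (e.g.\ release time, or the last time slot at which they appear in the LP) so that the $i$-th coordinate of the prefix sum $\sum_{j \le k} \epsilon_j v^{(j)}$ encodes, up to the normalization by $T^*$, the accumulated integral-minus-fractional load on machine $i$ from the first $k$ jobs. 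Applying the hypothesized prefix Beck--Fiala bound $C$, every prefix sum has $\ell_\infty$ norm at most $C$. Since any interval sum is the difference of two prefix sums, this gives load discrepancy $O(C)\,T^*$ on every machine in every time window, which by a standard windowing argument translates to max flow time $O(C)\,T^*$.

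For total flow time, I would use geometric bucketing: partition jobs by the dyadic class of their fractional flow time, so that the $k$-th bucket contains jobs contributing flow time in $[2^{k-1},2^k]$ to the LP. For each bucket, apply the max flow time reduction above with the bucket's target flow time playing the role of $T^*$, obtaining an $O(C)$-approximate integral schedule for that bucket. Summing over the $O(\log P)$ nonempty buckets, or over $O(\log n)$ after merging buckets whose cumulative fractional mass is negligible, yields the claimed $O(\min\{\log n,\log P\}\cdot C)$ integrality gap.

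The main obstacle I anticipate is making step three rigorous: one must show that a single ordering of the jobs produces prefix sums that simultaneously capture the load errors in every window on every machine. The subtlety is that a single decision $\epsilon_j$ affects two machines, and its ``effect time'' may differ between them according to when the LP processes $j$. The argument must rely on the fact that after iterative rounding the structure degenerates to (essentially) a disjoint union of swap pairs between two machines, so that aligning all pairs under one temporal ordering remains consistent with window-wise load accounting, while preserving the $\|v^{(j)}\|_1 \le 1$ normalization throughout the simplification.
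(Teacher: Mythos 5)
Your max-flow-time step has a real gap: the reduction to a two-support (half-integral) solution. You invoke Lenstra--Shmoys--Tardos style iterative rounding and claim a \emph{constant} blowup produces a solution where every fractional job has support on exactly two machines with ``near-forest'' structure. That argument is exactly what fails here. The LST argument works for makespan because the LP has only $m$ nontrivial load constraints, so extreme points have a sparse bipartite support graph. The assignment LP for max flow time has an interval constraint for every $(i,t_1,t_2)$ --- $\Theta(n^2m)$ of them --- and extreme points are \emph{not} sparse in the LST sense; this is precisely the obstacle the problem's introduction flags, and naive iterative rounding on this LP is where the $O(\log n)$ factor in prior work comes from. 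The paper's Lemma~\ref{lem:maxflow-reduction} instead shows the reduction to half-integral can be done with only $O(1)$ multiplicative loss by a different mechanism: round to multiples of $1/2^{\ell}$, then recursively split each job into $2^{h-1}$ tiny pieces to get a half-integral sub-instance with $p'_{\max} = p_{\max}/2^{h-1}$, apply the half-integral rounding, and sum a geometric series of additive errors $f(n^2,m)\cdot p_{\max}/2^{h-1}$. Without that telescoping argument your ``constant blowup'' claim has no justification, and if you instead appealed to the literal [BK15] iterative rounding you would pick up a $\log n$ factor before the discrepancy step even starts. (Your vector construction and the release-time ordering for the half-integral case are correct and match the paper's Lemma~\ref{lem:maxflow-halfintegral}.)

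The total-flow-time step is where you diverge most from the paper, and your proposal as stated does not work. Bucketing jobs by their dyadic fractional flow-time class and scheduling each bucket independently with max-flow-time guarantee $O(C)\cdot 2^k$ ignores that the buckets compete for the same machine-time. You would need to argue the buckets can be interleaved without amplifying each other's flow time, and that is not a ``standard windowing'' fact. The paper instead works with the time-indexed auxiliary LP from [BK15], buckets by \emph{processing-time size class} $p_{ij}\in(2^{k-1},2^k]$ (not flow-time class), and crucially (i) normalizes the LP solution to have a \emph{consistent order property} (Lemma~\ref{lem:time-index-order}) so that a single global ordering of jobs aligns with the time-axis on every machine simultaneously --- this is precisely the ``single temporal ordering'' difficulty you anticipate, and the resolution is a nontrivial exchange argument within size classes, not the near-forest structure you gesture at; and (ii) invokes a black-box result (Theorem~\ref{thm:BK-schedule}) converting an integral $\alpha$-relaxed auxiliary-LP solution into a schedule with total flow time $O(\alpha\log P\cdot\mathrm{LP})$. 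Both ingredients are essential and neither appears in your sketch, so the claimed $O(\min\{\log n,\log P\}\cdot C)$ bound does not follow from the proposal as written.
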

%We present this result for max flow time in Section~\ref{sec:max_flow_time} and for total flow time  in Section~\ref{sec:total_flow_time}.
Using $C= O(\sqrt{\log n})$ by the result of Banaszczyk~\cite{DBLP:journals/rsa/Banaszczyk12}, this gives improved bounds on the integrality gaps of  $O(\sqrt{\log n})$ for maximum flow time and a $O(\min\{\log n, \log P\} \cdot \sqrt{\log n})$ for total flow time.
The prefix Beck-Fiala problem and its further generalization called the prefix Koml\'os problem (discussed later), are interesting problems on their own with several other applications. It is been conjectured that the discrepancy for these problems and other related problems on prefix discrepancy may be $O(1)$ \cite{Spencer86Prefix, DBLP:journals/rsa/Banaszczyk12, bjmss22-itcs}.
If this conjecture is true, then Theorem \ref{thm:main} would imply tight integrality gaps of $O(1)$ and $O(\min\{\log n, \log P\})$ for maximum and total flow time, respectively.

The idea in the proof is to define a prefix Beck-Fiala instance based on the fractional solution with one vector per job, 
and given a low discrepancy $\pm 1$ coloring of vectors, use the signs to determine which machine to assign the  corresponding job to. 
Of course, in general a job might have $m$ potential machines where it can be assigned, which is not a binary decision.
The first part in both proofs is to reduce the problem of rounding a general to that of rounding a half-integral solution.
In a half-integral solution each job has only two choices and the rounding problem can be related to 
discrepancy in a clean way. This reduction from general to half-integral solutions is quite standard in discrepancy, see e.g.~\cite{DBLP:journals/ejc/LovaszSV86}, however our reduction is somewhat different and requires more care.
%Lov\'asz, Spencer and Vesztergombi     
%to relate linear discrepancy to hereditary discrepancy. 
Additional difficulty arises in total flow time, because the linear program does not naturally
give rise to an order on the jobs, for example,
by release time, in which prefix discrepancy
should be applied.
We address this by preprocessing the LP solution so that there is one consistent order across all machines. Although we cannot enforce this order between any two jobs in the LP schedule, we can for jobs of similar sizes within a machine, which in turn suffices for our proof.

\paragraph{Algorithmic aspects.}

Our reduction in Theorem \ref{thm:main} is constructive in both cases: if a discrepancy $C$ coloring for prefix Beck-Fiala can be constructed in polynomial time, we get a $O(C)$-approximation algorithm for max flow time and a $O(\min\{\log(n), \log(P)\} \cdot C)$-approximation algorithm for total flow time.

However, Banaszczyk's proof does not imply an efficient algorithm
that recovers the signs $\epsilon_1,\dotsc,\epsilon_n$. 
While there has been a lot
of progress on making various techniques in discrepancy theory constructive~\cite{DBLP:conf/focs/Bansal10, DBLP:journals/siamcomp/LovettM15, DBLP:conf/stoc/BansalG17, DBLP:journals/siamcomp/BansalDG19, DBLP:journals/toc/BansalDGL19}, the case of prefix discrepancy remains elusive. The best known bound for prefix Beck-Fiala that is achievable in polynomial time is $O(\log n)$, and making progress here is an interesting open problem.
Thus our bounds in Theorem \ref{thm:main} do not give better constructive approximation guarantees over those already known.
However, they do give improved efficient
estimation algorithms, that is, algorithms which approximate the value of the optimal
solution up to a multiplicative error. The optimum of the LP relaxation, which can
be computed in polynomial time, gives such an estimation.

%Our  with~\cite{DBLP:journals/rsa/Banaszczyk12} gives improved upper bounds on the integrality gaps ($O(\sqrt{\log n}$ for max flow time and $O(\min\{\log(n), \log(P)\} \cdot \sqrt{\log n})$ for total flow time) but not constructive  approximation algorithms. 

\paragraph{Conjectures for prefix discrepancy.}
The result of Banasczcyk for prefix discrepancy actually applies (with a bound of $O(\sqrt{\log nm})$ to the more general setting where we only assume that the vectors $v^{(1)}, v^{(2)}, \ldots, v^{(n)}$ have bounded $\ell_2$-norm $\|v^{(j)}\|_2\leq 1$ (note that $\|v\|_2 \leq \|v\|_1$ for any vector $v$). With this weaker assumption, already a major open problem, known as Koml\'os conjecture, is whether the discrepancy is constant in the setting without prefixes. 
Yet, it is plausible that the discrepancy is constant even for the prefix version of the Koml\'os problem.
As some supporting evidence for this conjecture, we consider a natural SDP relaxation of the prefix Koml\'os problem, and show that the SDP-discrepancy is bounded by $1$.
\begin{theorem}
  The SDP-discrepancy of the prefix Koml\'os problem is at most $1$.
\end{theorem}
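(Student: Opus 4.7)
The SDP-discrepancy of the prefix Koml\'os problem I read as the infimum of those $c\ge 0$ for which there exist unit vectors $u_1,\dotsc,u_n$ in some Hilbert space satisfying $\bigl\|\sum_{j\le k}v_i^{(j)}u_j\bigr\|_2\le c$ for every $k\in[n]$ and $i\in[m]$; equivalently, a positive-semidefinite matrix $Y$ with $Y_{jj}=1$ and $(V_kY_kV_k^T)_{ii}\le c^2$ for all $i,k$, where $V_k:=V_{\cdot,1:k}$ and $Y_k:=Y_{1:k,1:k}$. The plan is to exhibit $u_j$'s achieving $c\le 1$ by induction on~$k$.

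I build $u_1,u_2,\dotsc$ sequentially while maintaining the invariant $\|S_k^{(i)}\|_2\le 1$ for every $i$, where $S_k^{(i)}:=\sum_{j\le k}v_i^{(j)}u_j$. The base case $k=1$ holds for any unit vector $u_1$ because $|v_i^{(1)}|\le\|v^{(1)}\|_2\le 1$. At the inductive step I enlarge the Hilbert space by a fresh copy of $\mathbb{R}^m$ and look for $u_k$ of the form $u_k=(\hat u_k,\beta v^{(k)})$, with $\hat u_k$ in the old space and $\beta\ge 0$; the unit-norm constraint becomes $\|\hat u_k\|_2^2+\beta^2\|v^{(k)}\|_2^2=1$, and a direct computation gives
\[
\|S_k^{(i)}\|_2^2=\|S_{k-1}^{(i)}\|_2^2+2v_i^{(k)}\langle S_{k-1}^{(i)},\hat u_k\rangle+(v_i^{(k)})^2.
\]
So the invariant is preserved iff some $\hat u_k$ with $\|\hat u_k\|_2\le 1$ satisfies $2v_i^{(k)}\langle \hat u_k,S_{k-1}^{(i)}\rangle\le 1-\|S_{k-1}^{(i)}\|_2^2-(v_i^{(k)})^2$ for every $i\in[m]$.

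The heart of the proof, and where the main obstacle lies, is showing this linear feasibility problem admits a solution. A separating-hyperplane duality reduces it to the single inequality
\[
\sum_i\lambda_i\bigl(\|S_{k-1}^{(i)}\|_2^2+(v_i^{(k)})^2-1\bigr)\le 2\Bigl\|\sum_i\lambda_iv_i^{(k)}S_{k-1}^{(i)}\Bigr\|_2 \qquad (\forall\,\lambda\in\mathbb{R}_{\ge 0}^m).
\]
A crucial structural fact is that the $S_{k-1}^{(i)}$ are not arbitrary: writing $U_{k-1}$ for the matrix with columns $u_1,\dotsc,u_{k-1}$ and $X:=U_{k-1}^TU_{k-1}$ (PSD with unit diagonal), the Gram identity $\langle S_{k-1}^{(i)},S_{k-1}^{(i')}\rangle=(V_{k-1}XV_{k-1}^T)_{i,i'}$ expresses $\bigl\|\sum_i\lambda_iv_i^{(k)}S_{k-1}^{(i)}\bigr\|_2^2$ as the quadratic form $(V_{k-1}^T\Lambda v^{(k)})^TX(V_{k-1}^T\Lambda v^{(k)})$ with $\Lambda=\operatorname{diag}(\lambda)$. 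Naive arguments using only the Koml\'os bound $\sum_i(v_i^{(k)})^2\le 1$ and the inductive invariant $\|S_{k-1}^{(i)}\|_2\le 1$ are insufficient: one can cook up artificial configurations satisfying these two conditions while violating the displayed inequality. It is therefore essential to exploit the structural constraint that the $S^{(i)}$'s come from an actual run of the construction on columns $v^{(1)},\dotsc,v^{(k-1)}$ all obeying Koml\'os; establishing the inequality under this full set of constraints is the main technical task.

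As a complementary route, one may try to produce $Y$ globally via a Cholesky-type factorization of $V^TV$ encoding the sequential prefix cancellations, which if successful would bypass the iterative analysis with a shorter, more transparent argument.
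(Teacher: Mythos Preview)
Your inductive route is very different from the paper's, and as written it has a genuine gap: the invariant $\|S_{k-1}^{(i)}\|\le 1$ is not strong enough to guarantee that the linear feasibility problem at step $k$ has a solution, even when the $S_{k-1}^{(i)}$ arise from a legitimate run of your construction on Koml\'os columns. Take $m=2$, $v^{(1)}=v^{(2)}=\tfrac{1}{\sqrt 2}(1,-1)$ and $v^{(3)}=\tfrac{1}{\sqrt 2}(1,1)$. For any unit $u_1$, the choice $\hat u_2=0$ (hence $u_2\perp u_1$) is legal at step~2 and yields $S_2^{(1)}=-S_2^{(2)}$ with $\|S_2^{(1)}\|=\|S_2^{(2)}\|=1$. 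Your displayed formula then gives $\|S_3^{(1)}\|^2+\|S_3^{(2)}\|^2=3$ for every $\hat u_3$, so they cannot both be $\le 1$; equivalently your dual inequality fails at $\lambda=(1,1)$ (left side $=1$, right side $=0$). Thus the ``structural constraint that the $S^{(i)}$'s come from an actual run'' does not rescue the argument. A repair would require a specific rule for selecting $u_k$ among the feasible options together with a strictly stronger invariant---neither of which you supply---and your fallback Cholesky suggestion is only a hope, not a proof.

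The paper takes a completely different, non-sequential route. Fixing $\delta>0$ and $r=O(\delta^{-2}\log mn)$, it blows up each row $i$ into a block of $r$ rows and each vector $v^{(j)}$ into $r$ vectors $v^{(j,\ell)}\in\mathbb R^{rm}$, where $v^{(j,\ell)}$ carries $v^{(j)}_i$ only in coordinate $(i,\ell)$. It then applies Banaszczyk's general prefix theorem to the concatenated sequence $v^{(1,1)},\dotsc,v^{(1,r)},v^{(2,1)},\dotsc,v^{(n,r)}$ with the convex body $K=\bigcap_i\{x:\sum_{\ell}x_{i,\ell}^2\le (1+\delta)^2 r\}$, whose Gaussian measure exceeds $1-\tfrac{1}{2nr}$ by a chi-square tail bound. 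The resulting signs $\varepsilon_{j,\ell}$ are packaged into $w_j=r^{-1/2}(\varepsilon_{j,1},\dotsc,\varepsilon_{j,r})\in\mathbb R^r$; restricting $\sum_{j\le k}\sum_\ell\varepsilon_{j,\ell}v^{(j,\ell)}\in K$ to block $i$ reads exactly $\bigl\|\sum_{j\le k}v_i^{(j)}w_j\bigr\|_2\le 1+\delta$. Letting $\delta\to 0$ gives the bound $1$. The point is that Banaszczyk's theorem handles all prefixes simultaneously and globally, avoiding the local dead-ends a forward greedy scheme can fall into.
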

This SDP discrepancy bound could also be of interest as it could potentially give a way to find a better
constructive bound for the prefix Koml\'os or prefix Beck-Fiala problem.
While it seems extremely ambitious to further improve  Banaszczyk's $O(\sqrt{\log nm})$ bound for prefix Koml\'os (given the current status of the Koml\'os conjecture), 
improving the bounds for prefix Beck-Fiala may be easier
%very well be an orthogonal direction to proving Koml\'os conjecture
and thus we want to emphasize this open question here.
\begin{conjecture}
  The discrepancy for the prefix Beck-Fiala problem is bounded by a constant.
  \label{conj:prefix_beck_fiala}
\end{conjecture}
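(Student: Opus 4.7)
Because Conjecture~\ref{conj:prefix_beck_fiala} is explicitly open, what follows is a plan rather than a complete proof; I will indicate where the main obstruction lies.

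My first step would be to reduce to the \emph{$2$-sparse} case singled out in the introduction: each vector $v^{(j)}$ has at most two nonzero coordinates. One would follow a Lov\'asz--Spencer--Vesztergombi style partial-coloring reduction, iteratively fixing a constant fraction of coordinates by moving in the nullspace of the ``dense'' rows, so that any constant bound in the $2$-sparse regime pulls back to the general case. This mirrors the general-to-half-integral reduction used for Theorem~\ref{thm:main}.

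On the resulting $2$-sparse instance, I would start from the vector coloring produced by the preceding SDP-discrepancy theorem (which already achieves prefix SDP-discrepancy at most~$1$) and round it via a Brownian/Gram--Schmidt walk in the style of Bansal, Lovett--Meka, and Bansal--Dadush--Garg: evolve a fractional sign vector $x(t)\in[-1,1]^n$ along increments orthogonal to the currently tight prefix constraints, freezing each coordinate upon hitting $\pm 1$. The $2$-sparse structure is attractive because the support pattern is a union of edges and singletons, so only $O(1)$ prefix constraints couple any given variable at any time, which is precisely the locality that constant-discrepancy arguments tend to exploit.

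The hard part will be the tail bound, and this is where the paper itself flags a genuine barrier. A naive union bound over the $\Theta(nm)$ prefix constraints only recovers the Banaszczyk-style $O(\sqrt{\log n})$; to obtain a constant one would need a multiscale/chaining argument exploiting (i)~that adjacent prefix constraints differ by a single rank-one update, and (ii)~that the $\ell_1$-budget of each update is at most~$1$. Moreover, the authors explicitly note that known techniques handle $2$-sparse vectors with $\{-1,0,1\}$ entries but ``are unlikely to transfer to the more general $2$-sparse case of bounded $\ell_1$-norm''. This is precisely the point at which I would expect any concrete instantiation of the above plan --- SDP walk, convex-body pushforward, or iterated rounding --- to stall, and it is the core reason the conjecture remains open.
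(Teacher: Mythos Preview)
The paper does not prove Conjecture~\ref{conj:prefix_beck_fiala}; it is stated as open, and the surrounding material (the SDP bound of Section~\ref{sec:SDP} and the maker--breaker separation of Section~\ref{sec:game}) is offered only as supporting evidence and as an obstruction, respectively. You correctly recognise this and present a plan rather than a proof, so there is no ``paper's proof'' to compare against.

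That said, one step of your plan is not justified and does not follow from anything in the paper. You propose to reduce general prefix Beck--Fiala to the $2$-sparse case via a Lov\'asz--Spencer--Vesztergombi / bit-by-bit argument, citing the general-to-half-integral reduction behind Theorem~\ref{thm:main}. But that reduction (Lemmas~\ref{lem:maxflow-reduction} and~\ref{lem:totalflow-reduction}) is specific to the \emph{scheduling} LPs: it is the assignment structure (each job fractionally split among machines) that lets one pair up the halves and obtain $2$-sparse vectors. For an arbitrary prefix Beck--Fiala instance, rounding a fractional coloring bit-by-bit yields, at each stage, a $\pm1$ problem on the \emph{same} vectors $v^{(j)}$, which need not be $2$-sparse at all. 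The paper never claims that Conjecture~\ref{conj:prefix_two_sparse} implies Conjecture~\ref{conj:prefix_beck_fiala}; it only claims that Conjecture~\ref{conj:prefix_two_sparse} suffices for the flow-time applications. So your first reduction step is a gap, not a routine preliminary.

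Your identification of the main barrier is accurate and aligns with the paper: the maker--breaker lower bound of Section~\ref{sec:game} shows that the clean pairing strategy for $\{-1,0,1\}$ entries provably fails for fractional entries, which is exactly the regime your SDP-walk plan would have to handle. The paper gives no indication of how to get past this, and neither does your proposal; both leave the conjecture open.
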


In fact, in our reductions in Theorem~\ref{thm:main}, the vectors in resulting prefix Beck-Fiala instances have sparsity only two (i.e.,~two non-zero entries). So we highlight this seemingly very special case of Conjecture \ref{conj:prefix_beck_fiala}. 
\begin{conjecture}
\label{conj:2-sparse}
  The discrepancy for the prefix Beck-Fiala problem where each vector $v^{(i)}$ has sparsity $2$ is bounded by a constant.
  \label{conj:prefix_two_sparse}
\end{conjecture}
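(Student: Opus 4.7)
The plan is to cast a $2$-sparse prefix Beck--Fiala instance as a signing problem on an ordered weighted multigraph $G=(V,E)$: the coordinates $1,\ldots,m$ are vertices, and each vector $v^{(i)}$ with support $\{u_i,w_i\}$ and entries $(a_i,b_i)$, $|a_i|+|b_i|\le 1$, becomes an edge $e_i$ carrying the pair of endpoint weights $(a_i,b_i)$. Edges are totally ordered by input index, and the task is to choose signs $\epsilon_i\in\{\pm 1\}$ so that, for every vertex $j$ and every prefix $k$, the signed weighted degree at $j$ in $\{e_1,\ldots,e_k\}$ is $O(1)$.

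My first step would be a reduction to \emph{standardized} edges, where each $(a_i,b_i)$ lies in a constant-size set of ``atomic'' patterns such as $(\pm 1/2,\pm 1/2)$ and $(\pm 1,0)$. The hope is to split a general edge into a convex combination of such atomic edges inserted at the same position in the input order, and argue that an $O(1)$ discrepancy bound for the atomic instance transfers back to the original. Once edges are atomic, the problem becomes a signing of an ordered signed-edge-labeled multigraph. Here the natural offline target is an \emph{Eulerian sign assignment} $\epsilon^\star$: after adding virtual edges to fix degree parities, every connected component admits an Euler tour whose alternating signing achieves $O(1)$ discrepancy at every vertex for the \emph{final} sum.

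The main effort is to attain the Eulerian target under an adversarial input order. My plan is a hybrid online/offline scheme driven by the potential $\Phi_k=\sum_j \cosh(\lambda s_k[j])$ for a small constant $\lambda>0$. Process vectors in order; when setting $\epsilon_i=\epsilon^\star_i$ does not raise $\Phi$ by more than a universal constant, commit to $\epsilon^\star_i$; otherwise flip and record a unit of ``debt'' at edge $e_i$. Because each vector touches only two terms of $\Phi$ and the atomic weights are balanced, a second-order expansion of $\cosh$ should show that every forced flip strictly decreases $\Phi$, so the number of flipped edges per vertex remains bounded. The refined step I would attempt is to \emph{pair} flipped edges through the Eulerian tour so that their aggregate contribution to any vertex, at any prefix, telescopes to $O(1)$; this is the analogue of ``in/out'' cancellation from the classical Euler tour argument, now made robust to adversarial orderings.

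I expect the main obstacle to lie in the very first step, the reduction to atomic weights. In the $\{-1,0,1\}$ case one has a clean $\pm 1$-incidence matrix on which matching and Euler-tour arguments apply; for general real weights of bounded $\ell_1$-norm, an edge with weight pattern $(1-\delta,\delta)$ smoothly interpolates between a $1$-sparse and a genuine $2$-sparse vector, and any naive splitting either blows up the number of edges in a prefix or destroys the sparsity-$2$ structure. This is precisely the regime in which, as the authors note, the known $\{-1,0,1\}$ techniques appear not to transfer, and bridging it is the crux of the conjecture; I suspect a successful approach will require a recursive decomposition across scales of the weight ratio $\max(|a_i|,|b_i|)/\min(|a_i|,|b_i|)$ that is genuinely new.
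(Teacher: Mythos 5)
This statement is Conjecture~\ref{conj:2-sparse} in the paper, an \emph{open problem}: the authors do not prove it, and in fact Section~\ref{sec:game} is devoted to giving evidence that exactly the kind of approach you sketch is unlikely to succeed. So there is no ``paper's own proof'' to compare against; the relevant comparison is between your sketch and the paper's negative result.

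Your plan is, at its core, the classical Spencer-style Euler-tour/pairing argument for $2$-sparse $\{-1,0,1\}$ vectors, augmented with a $\cosh$-potential to handle the prefix constraints. The paper shows that this template is precisely what breaks. Concretely, Spencer's technique for $2$-sparse $\{-1,0,1\}$ vectors (and for two-permutation discrepancy) factors through a robust strategy for the maker in the one-dimensional maker-breaker discrepancy game of Section~\ref{sec:game}: you pair up edges, and whenever ``the other side'' forces a sign on one member of a pair, you immediately cancel it with the partner. That is exactly your ``in/out cancellation along the Euler tour.'' The paper proves that with fractional weights in $[-1,1]$ the breaker can force $\Omega(\log n/\log\log n)$ discrepancy in this game, via a recursive construction over $O(\log n/\log\log n)$ weight scales. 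This is not merely an obstruction to the reduction-to-atomic-weights step you flagged; it is an obstruction to the pairing/telescoping step itself. Any proof of the conjecture must exploit information that the maker does not have in the game, i.e.\ global, offline structure beyond local pairing, which your sketch does not supply.

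Two further concrete gaps. First, the $\cosh$ potential yields $O(\log n)$-type bounds, not $O(1)$: the potential can grow slowly but steadily, and a ``second-order expansion shows each forced flip decreases $\Phi$'' is not something you get for free when $s_k[j]$ can already be $\Theta(1)$; you would need to argue that large deviations are self-correcting, which is exactly what fails for fractional weights. Second, the Euler tour defines a cyclic order on edges that in general has nothing to do with the adversarial input order, so flips recorded in input order do not pair up along the tour, and the claimed telescoping does not hold at arbitrary prefixes. Your instinct that the reduction to atomic weight patterns is the crux is partly right, but the paper's game lower bound indicates that even the post-reduction machinery would not close the gap; a genuinely new idea handling the multi-scale weight structure is needed, and the paper explicitly leaves this open.
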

Proving Conjecture \ref{conj:2-sparse} together with Theorem~\ref{thm:main} would give tight bounds on the integrality
gap in both variants of flow time (up to constants), and furthermore an algorithmic proof of this conjecture would give an optimal algorithm for the flow time problems (unless $\mathsf{P}=\mathsf{NP}$); a weaker but still very interesting question would be to make Banasczcyk's arguments constructive in this special case which would then lead to a $O(\sqrt{\log n})$-approximation algorithm for max flow time and a  $O(\min\{\log(n), \log(P)\} \cdot \sqrt{\log n})$-approximation algorithm for total flow time. Moreover, an affirmative solution to Conjecture~\ref{conj:2-sparse}
even for the case where each vector has only two non-zero entries, which are negations of each other ($a$ and $-a$),
would already settle the restricted assignment variant ($p_{ij} \in \{p_j, \infty\}$), which is open for
max flow time.

\paragraph{From max flow to total flow.}
% It is also a natural direction that really highlights the additional difficulty of prefix constraints (since the discrepancy is known to be constant in the absence of prefixes). 

Our reduction from flow time scheduling to prefix Beck-Fiala actually shows an equivalence between max flow time scheduling and a very special case of 2-sparse prefix Beck-Fiala. 
%In Section~\ref{sec:relation}
We make this equivalence explicit and use it to relate the two flow time objectives:
\begin{theorem}\label{thm:relation}
    If the integrality gap of the standard LP of max flow time is upper bounded by $C$, then the integrality gap of the standard LP of total flow time is upper bounded by $O(\min\{\log(n), \log(P)\} \cdot C)$.
\end{theorem}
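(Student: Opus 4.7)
The plan is a classical reduction-by-bucketing: decompose the jobs into $L = O(\min\{\log n, \log P\})$ groups of geometrically spaced sizes (or LP flow times), apply the max-flow-time integrality-gap hypothesis separately to each group, and glue the resulting schedules by time-sharing the machines equally between groups.

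First I would take an optimal fractional solution $x$ for the total-flow-time LP of value $T^*$ and classify the jobs into $L$ buckets. To obtain the $\log P$ factor, I would bucket jobs by a characteristic processing time: $B_k$ contains jobs whose $p_{ij}$ over the machines $i$ in their LP support lies in $[2^k,2^{k+1})$, after an $O(1)$-loss preprocessing (analogous to the half-integrality step used in the proof of Theorem~\ref{thm:main}) that forces each job's LP support to use processing times within a constant factor of each other. To get the $\log n$ factor, I would instead bucket by the LP flow time $F_j^{\mathrm{LP}} \in [2^k,2^{k+1})$, with a standard truncation argument showing that only $O(\log n)$ buckets carry non-negligible mass (jobs whose LP flow time is smaller than $T^*/n^2$ contribute at most $O(T^*)$ in total and can be handled by appending them at the end).

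Next, for each bucket $B_k$, the restriction of $x$ to the jobs in $B_k$ is a feasible fractional schedule on the original machines for the max-flow-time sub-instance carried by these jobs, and by the bucketing its fractional max flow time is $O(2^k)$. Invoking the hypothesis that the max-flow-time LP has integrality gap at most $C$, this restriction can be rounded to an integral schedule $\sigma_k$ for the jobs of $B_k$ alone with max flow time at most $O(C \cdot 2^k)$. I would then interleave $\sigma_1,\dotsc,\sigma_L$ on the real machines via $L$-way round-robin, dedicating a $1/L$-fraction of each machine's speed to each bucket; this slows $\sigma_k$ by a factor of $L$, so every $j \in B_k$ has flow time at most $O(L \cdot C \cdot 2^k)$, and the total flow time is bounded by
\[
  \sum_{k} |B_k| \cdot O(L \cdot C \cdot 2^k)
  \;=\; O(L \cdot C) \cdot \sum_k |B_k| \cdot 2^k
  \;\leq\; O(L \cdot C \cdot T^*),
\]
where the last step uses $|B_k| \cdot 2^k \leq 2\sum_{j \in B_k} F_j^{\mathrm{LP}}$ for flow-time bucketing, and $|B_k|\cdot 2^k \leq \sum_{j \in B_k} p_j \leq \sum_{j \in B_k} F_j^{\mathrm{LP}}$ for processing-time bucketing.

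The main obstacle is ensuring that the per-bucket restriction of $x$ really corresponds to a feasible max-flow-time LP solution of value $O(2^k)$ on the original machines. The standard max-flow-time LP encodes flow time through interval-load (or time-indexed) constraints, and simply restricting the fractional assignment of $x$ to one bucket does not by itself satisfy the relevant constraints for the sub-instance it induces. The natural fix is to first apply the preprocessing developed for Theorem~\ref{thm:main} — in particular the half-integrality reduction and the imposition of a common job order across machines — so that the LP has enough structure for bucket restrictions to yield valid fractional schedules. A subsidiary nuisance in the processing-time bucketing is that in the unrelated-machine model a single job can have very different processing times on different machines, so one must first sparsify $x$ to ensure that each job's LP support sits inside one scale class, at an $O(1)$ loss that is absorbed into the final constant.
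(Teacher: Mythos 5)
Your proposal takes a genuinely different route from the paper, but the central step — gluing the per-bucket schedules by $L$-way round-robin — is incorrect, and this is fatal to the argument. Running a schedule with max flow time $F$ on a machine slowed to speed $1/L$ does not give max flow time $O(L\cdot F)$: the slowdown causes delays that accumulate over the whole time horizon, not just within each job's window. Concretely, take one machine, unit jobs released at $0,1,2,\dotsc,n-1$; these have max flow time $1$ at speed $1$, but at speed $1/L$ the $t$-th job completes at time $(t+1)L$, so its flow time is $\Theta(tL)$, which for the last job is $\Theta(nL)$, not $O(L)$. So your bound $F_j = O(L\cdot C\cdot 2^k)$ for $j\in B_k$ does not hold, and the final sum is not $O(L\cdot C\cdot T^*)$. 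A secondary issue (which you flag yourself) is that the restriction of the total-flow-time LP solution to a processing-time bucket is not a feasible max-flow-time LP solution with value $O(2^k)$ — the total-flow-time LP does not control per-job max flow at all, and even for flow-time buckets, $F_j^{\mathrm{LP}}\le 2^{k+1}$ only bounds a weighted average of delays, so some fractional mass of $j$ may sit arbitrarily far from $r_j$. These two gaps are not cosmetic; the whole bucket-and-timeshare scheme has to be replaced.

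The paper avoids both problems by going through the $2$-sparse interval discrepancy quantity $g(n,m)$ defined in Section~\ref{sec:relation}: it first shows (via the auxiliary LP, its \emph{cumulative} capacity constraints over all scales $\le k$, and the consistent-order preprocessing of Lemma~\ref{lem:time-index-order}) that the total flow time integrality gap is $O(g(n, m\log P)\cdot\log P)$, where the $\log P$ loss is incurred once inside Theorem~\ref{thm:BK-schedule} rather than by slowing every machine down; and then it shows $g(n,m)\le f(n(m+1),m)$ by an explicit gadget: each $2$-sparse vector is turned into one ``two-choice'' job plus $m$ filler jobs released at the same time, so that the assignment LP has value exactly $1$ and a good integral solution yields good signs for the discrepancy problem. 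Composing these two statements gives Theorem~\ref{thm:relation} directly, with no per-bucket rounding and no time-sharing. If you want to pursue a bucketing idea, you would at minimum need to charge the interference between buckets against the LP slack (as the cumulative constraints \eqref{eq:capacity} do) rather than eat it as a uniform speed loss.
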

In particular, improving our bound for max flow time would immediately imply also an improvement for total flow time.

Interestingly, the 2-sparse case of the prefix Beck-Fiala problem has been studied before in the further special case where the vectors have entries $\{0,1\}$, due to the close connection with the classical $2$-permutation problem in discrepancy, and it is known that the prefix discrepancy here is at most $1$. Moreover, this can be achieved by a simple algorithm.
These techniques further generalize to the case when the values are in $\{-1,0,1\}$ to give a prefix discrepancy of $O(1)$ \cite{sahil-personal}.
This makes Conjecture  \ref{conj:2-sparse} even more plausible.

\paragraph{Maker-Breaker games.} Perhaps surprisingly, all the known techniques for $2$-sparse vectors with $\{-1,0,1\}$ entries seem to break down completely for general values in $[-1,1]$.
We explore this further and show that those techniques are unlikely to extend to general $2$-sparse vectors of bounded $\ell_1$-norm.
Specifically,
%in Section~\ref{sec:game}
we show that the techniques extend
naturally to a related maker-breaker discrepancy game. A good strategy for this game implies a small prefix discrepancy for $2$-sparse vectors.
Yet, we formally separate the classes of $\{-1,0,1\}$ entries and $[-1,1]$ entries
in this game, showing an impossibility result for the latter.
\begin{theorem}
 In the maker-breaker game (defined in Section~\ref{sec:game}) the maker can maintain constant discrepancy if all entries take integer values in $\{-1, 0, 1\}$, whereas no strategy can obtain a discrepancy $o(\log n/\log\log n)$ in the presence of fractional values in $[-1, 1]$.
\end{theorem}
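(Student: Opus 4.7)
The plan is to establish the two directions separately: the positive result ($\{-1,0,1\}$ entries) by exhibiting an explicit Maker strategy, and the negative result ($[-1,1]$ entries) by designing an adaptive Breaker strategy that defeats every Maker strategy.

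For the upper bound, I would view each 2-sparse $\{-1,0,1\}$ vector revealed by Breaker as a signed edge $e=(i,j,\sigma)$ on the coordinate set $[m]$, where $\sigma\in\{+,-\}$ records whether the two nonzero entries agree or disagree in sign. Maker's choice of $\epsilon\in\{\pm 1\}$ either adds or subtracts $e$ from the running signed-sum vector $S\in\mathbb{Z}^m$. The strategy is essentially the classical online 2-permutation algorithm lifted to the signed-edge setting: maintain $\|S\|_\infty\le c$ together with a parity invariant coupling $(S_i,S_j)$ that depends on $\sigma$, mirroring the argument referenced as [sahil-personal]. A short case analysis on the four sign patterns at $(i,j)$ shows that at least one choice of $\epsilon$ keeps $\|S\|_\infty$ within a fixed constant, while the parity invariant ensures that no coordinate can be pinned at the boundary with no admissible move.

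For the lower bound, I would construct an adaptive Breaker strategy that operates in $L=\Theta(\log n/\log\log n)$ phases, each using a disjoint pool of coordinates. In phase $\ell$, Breaker works on the numerical scale $1/b^\ell$ with base $b=\Theta(\log n)$, pairing a large entry of magnitude $1$ with a small entry of magnitude $1/b^\ell$ in a tree-doubling construction so that whichever sign Maker picks, a charge of magnitude $\Theta(1/b^\ell)$ is deterministically deposited at a designated ``parent'' coordinate of the pair. Iterating up the tree within phase $\ell$ accumulates a charge of $\Theta(1)$ at the phase root, and cascading successive phase roots into one another funnels total charge $\Omega(L)=\Omega(\log n/\log\log n)$ into a single coordinate. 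The total number of vectors is roughly $b^L=n^{\Theta(1)}$, which fits within the $n$-vector budget after choosing the constants appropriately.

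The main obstacle is phase composition in the lower bound: Maker could try to absorb late-phase pressure using slack accumulated during earlier phases, or to anticipate Breaker's tree structure. To prevent this, I plan to make Breaker's within-phase pairings depend adaptively on Maker's past signs, so that the precise target of each small-entry vector is a priori unpredictable, and to show via a potential-like accounting that each phase's contribution to its root coordinate is irreversible up to an additive $O(1)$ error independent of $L$. A secondary subtlety in the upper bound is that a naive greedy on $\|S\|_\infty$ alone fails under long runs of edges incident to a single coordinate; the parity invariant is exactly what the classical argument uses to resolve this, and I expect transcribing it carefully to the maker-breaker formulation to be the only nontrivial bookkeeping on the positive side.
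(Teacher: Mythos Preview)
Your proposal rests on a misreading of the game. In the one-dimensional discrepancy game of Section~\ref{sec:game}, the input is a \emph{fixed} sequence of scalars $v^{(1)},\dotsc,v^{(n)}\in[-1,1]$ known to both players in advance; there are no $2$-sparse vectors in $\mathbb{R}^m$, no coordinates to pool, and the breaker does not ``reveal'' anything. Moreover, the breaker is not restricted to forcing the maker's hand: on the breaker's turn, the breaker also picks an as-yet-uncolored index and a sign for it. Consequently, both halves of your plan are aimed at the wrong object. The upper-bound argument you sketch (signed edges on $[m]$, parity invariants coupling two coordinates) has nothing to attach to in a one-dimensional game; and the lower-bound construction (phases on disjoint coordinate pools, funnelling charge into a single coordinate) presupposes multiple coordinates and an adaptive breaker who chooses the vectors, neither of which is available here.

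What the paper actually does is quite different. For the positive part with values in $\{-1,1\}$, the maker pairs consecutive elements $(v^{(2i-1)},v^{(2i)})$ and, whenever the breaker colors one member of a pair, immediately colors the other with the opposite sign; when no half-colored pair is pending, the maker colors the first uncolored element greedily. A short induction then bounds every prefix by a constant. For the negative part, the paper writes down a single explicit instance: a recursive $k^2$-ary tree of height $k/2$ whose preorder traversal gives scalars in $[1/2,1]$, with $n=k^{O(k)}$. The breaker maintains an invariant structure of nested intervals and ``anchor'' indices $i_0,\dotsc,i_{\ell+1}$ so that either the depth $\ell$ reaches $k/2$ or a sibling index runs out, in either case forcing some interval (hence some prefix) to have value $\Omega(k)=\Omega(\log n/\log\log n)$. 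Any correct attempt will need to work in this one-dimensional, fixed-input, alternating-move setting.
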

Thus, a proof for low prefix discrepancy must make use of information not available in this game, which the known techniques in the $\{-1,0,1\}$ case do not.

\subsection{Outline}

Starting with our main results, we prove
in Sections~\ref{sec:max_flow_time} and~\ref{sec:total_flow_time} the bounds
on the integrality gaps for max flow time and total flow time.
In Section~\ref{sec:relation} we then take a closer look at the precise
discrepancy bound necessary for the aforementioned proofs to hold.
We point out that a special case of $2$-sparse vectors is essentially equivalent
to the integrality gap for max flow time and sufficient for the proof of total
flow time. Consequently, any improvement on max flow time would yield also
an improvement for total flow time.
Then in Section~\ref{sec:game} we study a discrepancy game that
provides evidence that the two sparse case does
not easily follow from a clean approach that solves the case when additionally
all entries are in $\{-1, 0, 1\}$.
Finally, in Section~\ref{sec:SDP} we present a modified version of a proof due
to Raghu Meka, which, by applying Banasczcyk's technique on a non-trivial
convex body, shows that the vector relaxation of prefix Beck-Fiala
and the stronger prefix Koml\'os has a constant discrepancy.

\subsection{Related work}
The literature on both flow time and discrepancy is extensive and we only mention the most relevant results
on offline approximations for flow time and on prefix discrepancy.
%Although there are many important results for
%online algorithms, speed augmentation and other models,
%we limit ourselves to offline approximation algorithms.

\paragraph*{Total flow time.}
Already in simple case of multiple machines,
total flow time is hard to approximate better than $O(\log(P) / \log\log(P))$, see~\cite{DBLP:conf/focs/GargK07, DBLP:conf/isaac/GargKM08}.
On the other hand $O(\log(P))$-approximations are known in various settings.
First, this was shown for identical machines ($p_{ij} = p_j$)~\cite{DBLP:journals/jcss/LeonardiR07, DBLP:conf/stoc/AwerbuchALR99}.
For related machines ($p_{ij} = p_j / s_i$) the same guarantee was obtained in~\cite{DBLP:conf/icalp/GargK06}.
An $O(\log(P))$-approximation is also known for the restricted assignment setting where processing times satisfy $p_{ij}\in\{p_j,\infty\}$~\cite{DBLP:conf/focs/GargK07}. 
This result follows from an extension of
the single-source unsplittable flow problem \cite{DinitzGG99}. However, this method depends on the processing times
of a job being the same on all eligible machines.
As mentioned before, in the most general unrelated machine model there is
an $O(\log n \log P)$-approximation~\cite{DBLP:conf/stoc/BansalK15},
which we (non-constructively) improve upon.
%$\log(P) / \log\log(P)$ hardness for identical machines total flow time + $\log(P)$ for some generalization of restricted assignment: \cite{DBLP:conf/isaac/GargKM08}

Another line of work that has received significant attention recently is concerned
with minimizing the sum of weighted flow times. Since the problem becomes
very hard already on two machines, the main open problem was to
get a constant approximation in the single machine setting,
which has recently been achieved with the currently best rate being~$2+\epsilon$~\cite{DBLP:journals/siamcomp/BansalP14, DBLP:conf/focs/Batra0K18, DBLP:conf/soda/FeigeKL19, DBLP:conf/stoc/RohwedderW21}.

\paragraph*{Max flow time.}
For identical machines a greedy algorithm yields a $3$-approximation~\cite{DBLP:conf/soda/BenderCM98, DBLP:journals/ijfcs/Mastrolilli04}.
It is an intriguing question whether this can be improved on; to the best of our knowledge even a PTAS could exist.
A constant approximation is also known for related machines~\cite{DBLP:journals/toc/BansalC16}.
For unrelated machines is the state-of-the-art is an
$O(\log n)$-approximation algorithm~\cite{DBLP:conf/stoc/BansalK15},
for which we give a (non-constructive) improvement.
Note that even in the restricted assignment setting, no other results are known.
The unsplittable flow approach used in total flow time does not seem to help in this case, which can
perhaps be explained by the fact that the proof uses amortization between machines, which is unsuitable
for the hard constraints in max flow time.
Morell and Skutella conjecture that a strengthening of the
result for unsplittable flow is possible that indeed would be
sufficient for the restricted assignment case~\cite{MorellS21}.
However, the current status of the conjecture does not imply any meaningful guarantee.

\paragraph{Prefix discrepancy.}
%Given a sequence of $T$ vectors $v_1, \ldots, v_T$ in the Euclidean unit ball $B_2^d$ (i.e., of $\ell_2$-length at most $1$), the \emph{prefix discrepancy} problem, also widely known as the \emph{signed series} problem, asks to find a coloring/signing  $x \in \{\pm\}^T$ to minimize $\max_{\tau \in [T]} \| \sum_{t\leq \tau} x_t v_t \|_\infty$.
Prefix discrepancy is also widely known as the \emph{signed series}
problem.
%In the prefix Koml\'os setting we consider input vectors
%of bounded $\ell_2$-norm.
%The classical Koml\'os problem is a special case of this problem where we want to minimize only the final discrepancy $\|\epsilon_1 v^{(1)} + \cdots + \epsilon_n v^{(n)}|_\infty$. 
The prefix Koml\'os problem was introduced by Spencer~\cite{Spencer77} who showed that
there always exists a coloring with a prefix discrepancy bound that only depends on the dimension $m$ (i.e., independent of the number of vectors $n$).
This was later improved to $O(m)$ by Barany and Grinberg~\cite{BaranyGrinberg81}.
In fact, their result is more general in the way that it gives
a bound of $O(m)$ for the norm of any prefix sum
if the input vectors are bounded in the same norm. This holds for
any norm.

Since then prefix discrepancy has been studied a lot and has found several important applications. 
For instance, it implies a bound on the classical Steinitz problem~\cite{Chobanyan94,Steinitz-16} on the rearrangement of vector sequences, it arises naturally in online discrepancy problems where one is interested in bounding the discrepancy at all times~\cite{BJSS20,BJMSS-SODA21,ALS-STOC21}, and it also implies the best known bound for Tusn\'ady's problem~\cite{Nikolov-Mathematika19}.
%Vectors in a box paper. (See further application in related work.)
%\nnote{Speeding up dynamic programming problems \cite{EW18}, ....}
Banaszczyk~\cite{DBLP:journals/rsa/Banaszczyk12} showed that prefix Koml\'os admits a coloring of $O(\sqrt{\log mn})$ discrepancy, thereby exponentially improving the dependency on $m$ in the Barany-Grinberg bound, but incurring a $\sqrt{\log n}$ dependence on the number of vectors $n$. The bound simplifies to $O(\sqrt{\log n})$ for prefix Beck-Fiala.

%\begin{theorem}[\cite{DBLP:journals/rsa/Banaszczyk12}]
%\label{thm:Banaszczyk12}
%Given $v_1, \cdots, v_T \in B_2^d$, there always exists a signing $x \in %\{\pm\}^T$ such that  
%\[ \max_{\tau \in [T]} \Big\| \sum_{t\leq \tau} x_t v_t \Big\|_\infty = %O(\sqrt{\log d+ \log T}) \enspace. \]
%\end{theorem}

\paragraph*{Other applications of discrepancy in scheduling and packing.} 
Discrepancy was used in a breakthrough result~\cite{DBLP:conf/focs/Rothvoss13, DBLP:conf/soda/HobergR17}
to give an additive $O(\log n)$-approximation for the classical bin packing problem.
It was also used to get an improved guarantee of~$O(\log^{\nicefrac 3 2} n \cdot \mathrm{polyloglog}\,n)$
for the broadcast scheduling problem~\cite{DBLP:conf/soda/BansalCKL14}.
These results where based on a different partial coloring approach based on the entropy method. In our flow time applications in this paper, these techniques seem to inherently lose a $O(\log n)$ factor instead of $O(\sqrt{\log n})$ and do not improve existing results. 
%is an important technique from discrepancy,
%which can be used to construct partial colorings
%with low discrepancy.

The Steinitz problem, which is closely related to prefix discrepancy,
has many seen applications, including flow shop and job shop scheduling \cite{Barany81, Sevastjanov94} and faster algorithms for solving integer programs~\cite{BMMP12,EW18,JR-ITCS19}
and these integer programming algorithms themselves are applied in some scheduling problems, see for example~\cite{BerndtDJRAlenex}.

\section{Max flow time}
\label{sec:max_flow_time}
In this section we show that the natural assignment LP for max flow time has integrality gap $O(\sqrt{\log n})$.

\subsection{The assignment LP}
\label{sec:LP}
We formulate a natural assignment LP that was also used in~\cite{DBLP:conf/stoc/BansalK15}, and is
a generalization  of the well known linear program
for makespan minimization ($r_j = 0$ for all $j\in J$),
due to Lenstra, Shmoys, and Tardos%gave a rounding algorithm
%based on the structure of extreme points
%obtain their classical 2-approximation algorithm
~\cite{DBLP:journals/mp/LenstraST90}.

Before that, let us start with some definitions.
Let $T$ be a parameter which specifies a fixed bound on
the flow time of all jobs.
The optimum LP solution is the smallest $T$ for which the LP below is feasible.
Such a $T$ can be found by a binary search over the range $[1,np_{\max}]$.
Here $p_{\max}$ denotes the maximum over all finite $p_{ij}$.
Note that we allow $p_{ij} = \infty > p_{\max}$ to
model that job $j$ cannot be scheduled on machine $i$.
We may also assume that $p_{\max} \le T$, as if $p_{ij} > T$ then no feasible solution
can assign job $j$ to machine $i$ and we can set $p_{ij}=\infty$.

For each $j\in J$ and $i\in M$, let the variable $x_{ij}$ indicate  whether job $j$ is assigned
to machine $i$.
The assignment LP is obtained by relaxing $x_{ij}$ to be fraction as is defined as follows.
\begin{align*}
    \sum_{i \in M} x_{ij} &= 1 &\forall j\in J \\
    \sum_{j \in J : r_j \in [t_1, t_2]} x_{ij} p_{ij} &\le t_2 - t_1 + T &\forall t_1 \le t_2, \quad \forall i\in M \\
    x_{ij} &= 0 &\forall i\in M, j\in J \text{ with } p_{ij} > T \\
    x_{ij} &\ge 0
\end{align*}
The first set of constraints enforce that
each job is assigned to exactly one machine.
The second set of constraints bound the volume of jobs
released during some interval $[t_1,t_2]$ of time and processed on the same machine. We will refer to these as interval constraints.
These constraints
are clearly necessary for any integral solution with flow time at most $T$ as
all jobs released during $[t_1, t_2]$ must be completed before $t_2 + T$,
or else the flow time of some job would exceed $T$.
Thus, the total volume of such
jobs on each machine (the left hand side of the constraint) can be at most
the total volume the machine can process during $[t_1, t_2 + T]$
(the right hand side). Notice that it suffices to have the constraints only for times $t_1,t_2$ when some job is released, so there are only $n^2$ such choices.

On the other hand, these constraints are also sufficient (for an integral
solution) to guarantee that there is a schedule with max flow
time $T$; %simply 
%scheduling the jobs on each machine according to earliest-release-time-first,
%that is, 
simply considering the jobs in the order of their release times
and scheduling each job at the earliest possible time.

%\nnote{We can probably delete this para or move to appendix as it is quite standard.}
%The proof is well known and can be summarized as follows.
%First, we notice that the schedule processes on each machine the jobs
%in the order of their release times.
%We assume toward contradiction that there is a job $j$ which is %completed after
%$r_j + T$. From this job we move back in the schedule of the machine
%until there is a job $j'$ such that the machine is idle right before %processing $j'$ and never idle between processing $j'$ and $j$.
%We observe that $j'$ starts being processed at exactly $r_{j'}$, since %otherwise
%it would have been scheduled earlier. Hence, the machine processes a
%volume greater than $r_j + T - r_{j'}$ of jobs released between %$r_{j'}$ and $r_j$.
%This is a contradiction to the constraint for interval $[r_{j'}, r_j]$.

\subsection{Reducing to half-integral solutions}
We now reduce the task of rounding general LP solutions
to that of half-integral LP solutions.
%We use a method that is quite common in discrepancy related topics,
%see for example~\cite{DBLP:journals/ejc/LovaszSV86, DBLP:journals/toc/BansalDGL19}.

\begin{lemma}
\label{lem:maxflow-reduction}
  Suppose for any instance the additive integrality gap for half-integral solutions
  to the assignment LP is bounded by $f(n, m) \cdot p_{\max}$ for some non-decreasing function $f$.
  Then the additive integrality gap of the assignment LP is at most $O(f(n^2, m) \cdot p_{\max})$.
  In particular, its (multiplicative) integrality gap is $O(f(n^2, m))$.
\end{lemma}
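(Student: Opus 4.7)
My plan is a two-stage reduction: transform the general fractional LP solution into a half-integral one on a modified instance with at most $n^2$ (sub-)jobs, losing only $O(p_{\max})$ in the LP bound, and then invoke the hypothesized half-integral bound.

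Starting from an optimal LP solution $x$ with bound $T$, I would build a new instance $I'$ whose jobs each carry a release time and processing-time vector inherited from one of the $n$ originals, together with a half-integral LP solution $\tilde x$ feasible with bound $T + O(p_{\max})$. I would take $x$ to be an extreme point of the LP polytope (restricted without loss of generality to the polynomially many distinct interval constraints indexed by pairs of release times) and then split each original job $j$, shared fractionally among $k_j$ machines with weights summing to $1$, into at most $k_j$ half-integral sub-jobs by greedily pairing those machines: each sub-job is either fully assigned to a single machine or split equally (weight $1/2$) between two machines. A counting argument using the support size at an extreme point of the restricted LP would bound the total number of sub-jobs by $O(n^2)$.

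To verify that $\tilde x$ is feasible with bound $T + O(p_{\max})$ on $I'$, I would argue that when the splitting is done machine-by-machine in release-time order, carving off half-integral chunks as the cumulative fractional weight on the machine hits $1/2$ or $1$, the half-integral load on any prefix of time deviates from the original fractional load by at most one job's processing time, i.e.\ $p_{\max}$. Since every interval load is a difference of two prefix loads, each interval constraint is then violated by at most $O(p_{\max})$. Invoking the hypothesis on $\tilde x$ yields an integral schedule for $I'$ with max flow time at most $T + O(p_{\max}) + f(n^2, m) \cdot p_{\max}$. To convert this into an integral schedule for the original instance, I would assign each original job $j$ to one of the machines hosting its sub-jobs; since those sub-jobs together contribute at most $p_{ij} \le p_{\max}$ to any single machine, this un-splitting step costs another $O(p_{\max})$ in max flow time. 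Using $p_{\max} \le T$ from the LP setup, the additive gap $O(f(n^2, m) \cdot p_{\max})$ translates into the claimed multiplicative gap $O(f(n^2, m))$.

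The main obstacle I expect is executing the splitting so that simultaneously (i) the total sub-job count stays within $O(n^2)$ and (ii) the cumulative half-integral rounding error on every interval constraint is only $O(p_{\max})$. Naïve per-job splittings can fail either property, so the key will be to coordinate the splittings across the machines each job shares, exploiting the release-time ordering on each machine so that the load error telescopes across intervals rather than accumulating.
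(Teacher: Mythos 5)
Your proposal and the paper's proof diverge in a crucial way, and the divergence introduces a genuine gap.

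The paper's proof is an iterated, dyadic refinement: it first rounds the LP solution so that all variables are multiples of $1/2^\ell$ with $\ell=\lceil\log_2 n\rceil$, and then, for $h=\ell,\ell-1,\dotsc,1$, it invokes the half-integral hypothesis once per level to pass from multiples of $1/2^h$ to multiples of $1/2^{h-1}$. At level $h$ the auxiliary instance has processing times scaled down by $1/2^{h-1}$, so the per-level loss $f(n^2,m)\cdot p_{\max}/2^{h-1}$ forms a geometric series summing to $O(f(n^2,m)\cdot p_{\max})$. Crucially, no ``un-split'' ever happens: a solution that is integral for the level-$h$ auxiliary instance is, by construction, a valid solution at granularity $1/2^{h-1}$ for the original instance, and the recursion bottoms out at $h=0$ with a fully integral assignment.

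Your plan instead applies the half-integral hypothesis \emph{once} to a single split instance and then re-merges. The re-merge is where the argument breaks. After the half-integral rounding, a single original job $j$ may have its sub-jobs spread across several machines, and forcing $j$ entirely onto one of them adds up to $(1-\beta_{ij})\,p_{ij}$ of new load on that machine. This is indeed $\le p_{\max}$ \emph{per job}, which is what your write-up observes, but these errors do not cancel across jobs: a machine and an interval $[t_1,t_2]$ can receive $\Omega(n)$ such increments. For a concrete failure, take $n$ jobs released at time $0$, $m=3$ machines, $p_{ij}=p$ for all $i,j$, and the LP point $x_{ij}=1/3$. Any half-integral splitting of $x_{ij}=1/3$ produces, for each job, one sub-job sized $2/3$ shared between two machines and one integral sub-job of size $1/3$. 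After the half-integral rounding each original job ends up on two machines, and if the re-merge sends them all to machine~$1$, that machine's interval load blows up by $\Theta(n p_{\max})$, far exceeding the $O(f(n^2,m)\cdot p_{\max})$ budget. Choosing \emph{which} machine to merge onto so that interval loads stay balanced is itself a prefix-discrepancy problem of exactly the kind the lemma is meant to reduce to --- which is why the paper resolves it by iterating the half-integral step rather than by a single split-and-merge.

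Two smaller points. First, the extreme-point sparsity bound of $O(n^2)$ on the number of sub-jobs is not established; the assignment LP has on the order of $n^2 m$ interval constraints, so a naive rank bound does not give $O(n^2)$, and the paper does not rely on extreme points at all --- it obtains the $n^2$ bound directly from splitting each job into at most $2^{\ell-1}\le n$ pairs. Second, the description of the half-integral splitting (``carving off half-integral chunks as the cumulative fractional weight on the machine hits $1/2$ or $1$'') is not quite well-defined in a multi-machine setting, since the cumulative weights on different machines reach $1/2$ at different jobs, and it is unclear how to pair them consistently while keeping each sub-job supported on at most two machines. The paper sidesteps both issues by pre-rounding $x$ to multiples of $1/2^\ell$ and then pairing exact copies of weight $1/2^\ell$, which requires no extreme-point argument and makes the pairing trivial.
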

\begin{proof}
  Let $x^*$ be an optimal solution to the assignment LP of value $T^*$.
  We can construct another solution $x^{(\ell)}$ of value $T^{(\ell)} = T^* + p_{\max}$ where all variables are integer multiples of $1/2^\ell$ for
  $\ell = \lceil \log_2 n \rceil$:
  We start with $x^{(\ell)} = x^*$ and iteratively select a job $j$ that has variables that
  are not multiples of $1/2^\ell$.
  Note that since $\sum_{i\in M} x^{(\ell)}_{ij} = 1$, there must be at least two such variables for $j$. We increase one and decrease the other
  by the smallest margin so that one of them becomes a multiple of $1/2^\ell$. This is repeated until all variables are multiples of $1/2^\ell$.
  Notice that for any $t_1 \le t_2$ and $i\in M$, the value of
  $\sum_{j\in J : r_j\in [t_1, t_2]} x^{(k)}_{ij} p_{ij}$ can increase
  by at most $p_{\max} \cdot n / 2^\ell \le p_{\max}$.
  
  Next, we show inductively that there is a solution $x^{(h)}$ for all $h = \ell, \ell-1, \dotsc, 0$,
  where all the variables are integer multiples of $1/ 2^h$ and the
  objective is $T^{(h)} \le T^* + f(n^2, m) \cdot p_{\max} / 2^{h-1} + p_{\max}$.
  By construction, the solution $x^{(\ell)}$
  satisfies the base case. 
  
  Let us now assume we are given such a solution $x^{(h)}$
  for some $h$ and prove the existence of the solution $x^{(h-1)}$.
  To do this we create a new scheduling instance $I'$ and use the integrality gap guarantee for half-integral solutions for $I'$.
  To this end, for each job $j$ we form $2^{h-1}$ pairs $\{i_1, i_2\}$ such that each machine appears in
  exactly $2^h \cdot x^{(h)}_{ij}$ pairs.
  We construct now a new instance $I'$ with one job $j' = j'(j,i_1,i_2)$
  for each job $j$ and each such pair $\{i_1, i_2\}$.
  So the instance $I'$ has $2^{h-1}n \leq n^2$ jobs. 
  This $j'$ will be scheduled only on $i_1$ and $i_2$ and its processing times are defined by $p'_{i_1 j'} = p_{i_1 j} / 2^{h-1}$,
  $p'_{i_2 j'} = p_{i_2 j} / 2^{h-1}$, and $p_{i j'} = \infty$ for all $i \notin \{i_1, i_2\}$.
  So the new maximum finite processing time is $p'_{\max} \le p_{\max} / 2^{h-1}$.
  Moreover, the solution $x'_{i_1 j'} = x'_{i_2 j'} = 1/2$ is feasible for the LP with value $T^{(h)}$.
  An integral solution for $I'$ will now correspond to a solution of the original instance where each variable is a multiple of $1/2^{h-1}$.
  By the additive integrality gap for half-integral solutions
  and the induction hypothesis, 
  there is such a solution with value
  \begin{align*}
      T^{(h-1)} & \le T^{(h)} + f(n^2, m) \cdot p'_{\max} \\
      &\le T^* + f(n^2, m) \cdot p_{\max} / 2^{h-1} + p_{\max} + f(n^2, m) \cdot p_{\max} / 2^{h-1} \\
      &= T^* + f(n^2, m) \cdot p_{\max} / 2^{h-2} + p_{\max} \ .
  \end{align*}
  Thus, $x^{(0)}$ forms an integral solution with value $T^* + (2 f(n^2, m) + 1) p_{\max}$.
\end{proof}

\subsection{Rounding half-integral solutions}
We now bound the integrality gap for half-integral solutions.
\begin{lemma}
\label{lem:maxflow-halfintegral}
  Let $g(n, m)$ be a non-decreasing bound on the discrepancy of the prefix Beck-Fiala problem. Then, for any half-integral solution $x^*$ to the assignment LP with value $T^*$, 
  there exists an integral solution of value $T^* + O(g(n, m) \cdot p_{\max})$.
\end{lemma}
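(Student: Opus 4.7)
The plan is to encode the remaining rounding decisions as a $2$-sparse prefix Beck-Fiala instance. Because $x^*$ is half-integral and $\sum_i x^*_{ij} = 1$ for every job $j$, each job is either already integrally assigned in $x^*$ (in which case I leave its assignment unchanged) or is split evenly between exactly two machines, which I denote $i_1(j)$ and $i_2(j)$. Let $J_{\mathrm{frac}} = \{j_1, j_2, \ldots, j_K\}$ with $K \le n$ collect the fractional jobs, ordered by release time; the rounding of $j_k$ then reduces to choosing a sign $\epsilon_k \in \{-1, +1\}$, with $+1$ meaning ``assign to $i_1(j_k)$'' and $-1$ meaning ``assign to $i_2(j_k)$''.

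For each such $j_k$ I would introduce a $2$-sparse vector $v^{(k)} \in \R^m$ with
\begin{equation*}
    v^{(k)}_{i_1(j_k)} \;=\; \frac{p_{i_1(j_k),\, j_k}}{2\, p_{\max}}, \qquad v^{(k)}_{i_2(j_k)} \;=\; -\frac{p_{i_2(j_k),\, j_k}}{2\, p_{\max}},
\end{equation*}
and all other coordinates zero. Since the LP forces $x^*_{ij} = 0$ whenever $p_{ij} > T^*$ and we have $p_{\max} \le T^*$, both nonzero entries have absolute value at most $1/2$, so $\|v^{(k)}\|_1 \le 1$. Applying the prefix Beck-Fiala guarantee produces signs $\epsilon_1, \ldots, \epsilon_K$ with $\|\sum_{k=1}^{K'} \epsilon_k v^{(k)}\|_\infty \le g(K, m) \le g(n, m)$ for every prefix length $K'$.

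The core verification is that all interval constraints still hold for the rounded assignment with right-hand side $T = T^* + O(g(n, m) \cdot p_{\max})$. Fix a machine $i$ and an interval $[t_1, t_2]$; the fractional load $\sum_{j : r_j \in [t_1, t_2]} x^*_{ij} p_{ij}$ is at most $t_2 - t_1 + T^*$. A direct check of the sign convention shows that the rounding changes this sum on machine $i$ by exactly $p_{\max}$ times the $i$-th coordinate of $\sum_{k : r_{j_k} \in [t_1, t_2]} \epsilon_k v^{(k)}$. Because the $j_k$ are sorted by release time, this index set is contiguous, so the partial sum is a difference of two prefix sums and its $\ell_\infty$-norm is at most $2g(n, m)$. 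Hence the rounded interval load is at most $t_2 - t_1 + T^* + 2g(n, m) \cdot p_{\max}$, and the sufficiency of the interval constraints recalled in Section~\ref{sec:LP} then yields a feasible integral schedule of max flow time $T^* + O(g(n, m) \cdot p_{\max})$.

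The main obstacle I anticipate is conceptual rather than technical: designing $v^{(k)}$ so that (i) the vectors meet the $\ell_1$-bounded, $2$-sparse structure demanded by prefix Beck-Fiala, and (ii) the polynomially many interval constraints of the LP, not just one global constraint per machine, are simultaneously controlled by partial sums along a single ordering. Sorting the fractional jobs by release time is what converts the prefix discrepancy bound into an interval discrepancy bound on every machine with only a constant-factor loss; once the encoding is chosen this way, the remaining steps are essentially bookkeeping.
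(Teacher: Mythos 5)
Your proposal is correct and follows essentially the same route as the paper: the same $2$-sparse vectors $v^{(j)}$ normalized by $2p_{\max}$ with opposite signs on the two half-assigned machines, the same ordering of fractional jobs by release time, the same sign-to-machine rule, and the same interval-constraint verification via writing the interval load change as $p_{\max}$ times a difference of two prefix sums bounded by $2g(n,m)$. (A minor note: the $\ell_1$-bound $\|v^{(j)}\|_1\le 1$ just needs $p_{i_1 j},p_{i_2 j}\le p_{\max}$ because they are finite; the detour through $p_{\max}\le T^*$ is unnecessary, though harmless.)
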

\begin{proof}
Let $J_2$ be the jobs that are not integrally assigned in $x^*$.
For each job $j \in J_2$, there are exactly two machines $i_1, i_2$
such that $x^*_{i_1 j} = x^*_{i_2 j} = 1/2$. Moreover, the processing times
on these machines must be finite (and hence at most $p_{\max}$). We define a vector $v^{(j)}\in \mathbb R^m$
with
\begin{equation*}
    v^{(j)}_i = \begin{cases}
        p_{i_1 j}/2 p_{\max} & \text{ if } i = i_1 \\
        - p_{i_2 j}/2 p_{\max} & \text{ if } i = i_2 \\
        0 & \text{ otherwise.}
    \end{cases}
\end{equation*}
Here the choice of which value is positive and which is negative
can be made arbitrarily. Notice that all vectors $v^{(j)}$ have $\ell_1$-norm at most~$1$.

Order the jobs in $J_2$ as $\{j_1,j_2,\dotsc,j_{|J_2|}\}$ in non-decreasing order of their release time,
Consider the corresponding vectors $v^{(j)}$ in this order and 
apply the prefix Beck-Fiala theorem.  
Let $\epsilon_j \in \{-1, 1\}$ be the signs
such that for all $k=1,2,\dotsc,|J_2|$, it holds that
\begin{equation}
 \lVert \epsilon_{j_1} v^{(j_1)} + \cdots + \epsilon_{j_k} v^{(j_k)} \rVert_\infty \le g(n, m) \ .
\end{equation}

We now show that this implies an integral solution of value
at most $T^* + 2 g(n, m) \cdot p_{\max}$.
Let $j\in J$. If $j$ is integrally assigned in $x^*$, we assign it to the same
machine. On the other hand if $j\in J_2$, we assign it depending on the choice
of $\epsilon_j$. Recall that $v^{(j)}_{i_1} = p_{i_1 j} / 2 p_{\max}$
and $v^{(j)}_{i_2 } = - p_{i_2 j} / 2 p_{\max}$ for the two machines $i_1, i_2$ with
$x^*_{i_1 j} = x^*_{i_2 j} = 1/2$. If $\epsilon_j = 1$, we assign $j$ to $i_1$, or to $i_2$ otherwise. We denote the resulting integral solution by $x$.
 By construction each job $j\in J_2$ and each machine $i$
  satisfy
  \begin{equation}
      x_{ij} p_{ij} = x^*_{ij} p_{ij} + p_{\max} \cdot \epsilon_j v^{(j)}_i \ .
  \end{equation}
%  This is because if $x^*_{ij} = 1/2$ then
%    \begin{equation*}
%      x_{ij} p_{ij} = \frac 1 2 p_{ij} + p_{\max} \cdot \epsilon_j v^{(j)}_i = x^*_{ij} p_{ij} + p_{\max} \cdot \epsilon_j v^{(j)}_i \ .
%  \end{equation*}
%  If on the other hand $x^*_{ij} = 0$ then
%    \begin{equation*}
%      \underbrace{x_{ij}}_{= 0} p_{ij} = 0 = \underbrace{x^*_{ij}}_{=0} p_{ij} + p_{\max} \cdot \epsilon_j \underbrace{v^{(j)}_i}_{= 0} .
%  \end{equation*}
  We will now verify the interval constraints. To this end, let $t_1 \le t_2$
  and $i\in M$. We calculate
  \begin{align*}
      \sum_{j\in J : r_j\in [t_1, t_2]} x_{ij} p_{ij} &= \sum_{j\in J \setminus J_2 : r_j\in [t_1, t_2]} x_{ij} p_{ij} + \sum_{j\in J_2 : r_j\in [t_1, t_2]} x_{ij} p_{ij} \\
      &= \sum_{j\in J \setminus J_2 : r_j\in [t_1, t_2]} x^*_{ij} p_{ij} + \sum_{j\in J_2 : r_j\in [t_1, t_2]} \left[ x^*_{ij} p_{ij} + p_{\max} \cdot \epsilon_j v^{(j)}_i \right] \\
      &\le t_2 - t_1 + T^* + p_{\max} \cdot \sum_{j\in J_2 : r_j\in [t_1, t_2]} \epsilon_j v^{(j)}_i \ .
  \end{align*}
  Let $\ell$ be largest index such that $r_{j_{\ell}} \le t_2$, and let $k$ be largest index such that
  $r_{j_{k}} < t_1$.
  Then we have
  \begin{align}
      \sum_{j\in J_2 : r_j\in [t_1, t_2]} \epsilon_j v^{(j)}_i
      &= \sum_{h = 1}^{\ell} \epsilon_{j_h} v^{(j_h)}_i - \sum_{h = 1}^{k} \epsilon_{j_h} v^{(j_h)}_i \label{eq:prefix-maxflow} \\
      &\le \bigg\lVert \sum_{h = 1}^{\ell} \epsilon_{j_h} v^{(j_h)} \bigg\rVert_\infty
      + \bigg\lVert \sum_{h = 1}^{k} \epsilon_{j_h} v^{(j_h)} \bigg\rVert_\infty \notag \\
      &\le 2 g(n, m) \ . \notag
  \end{align}
  We conclude that
  \begin{equation*}
      \sum_{j\in J : r_j\in [t_1, t_2]} x_{ij} p_{ij} \le t_2 - t_1 + T^* + 2 g(n, m) \cdot p_{\max}\, ,
  \end{equation*}
  for every interval $[t_1,t_2]$ and machine $i$, which implies that the max flow time for the solution $x$ is $T^* + 2 g(n, m) \cdot p_{\max}$.
  \end{proof}
  
Using the bound $g(n,m) = O(\sqrt{\log n})$ due to Banaszczyk~\cite{DBLP:journals/rsa/Banaszczyk12},
 Lemmas~\ref{lem:maxflow-reduction} and~\ref{lem:maxflow-halfintegral} imply
that the integrality gap of the assignment LP is at most $O(\sqrt{\log n})$.

\section{Total flow time}
\label{sec:total_flow_time}
For total flow time we follow a similar approach as for max flow time.
We start by reducing to half-integral solutions and then show how to round half-integral solutions.

However, for total flow time the LP formulation is more involved
and the rounding poses some serious obstacles due to the time indexed variables.
In particular, recall that for max flow time the release times gave a consistent ordering of the jobs and it was
sufficient to bound the load on each interval (w.r.t.~this order of jobs). On the other hand, for total flow time,
we will need bounds over all the jobs {\em assigned} to a particular time interval on a machine. However, as the time to which
a job is assigned may differ on each machine, at first glance there seems to be no single order
to execute the prefix Beck-Fiala approach. We circumvent this problem by carefully rearranging
jobs within certain groups so that ultimately we arrive at one consistent order across all machines.

\subsection{The time indexed LP.}
We now introduce the standard time-indexed LP for total flow time problems~\cite{DBLP:conf/icalp/GargK06, DBLP:conf/focs/GargK07, DBLP:conf/isaac/GargKM08, DBLP:conf/stoc/BansalK15}.
We assume that time is slotted and consider a large enough time horizon.
The variables $y_{ijt}$ describe whether machine $i$ processes job $j$
at time $t$.
\begin{align}
    \min \sum_{i\in M} \sum_{j\in J} \sum_{t\ge r_j}
    &\left( \frac{t - r_j}{p_{ij}} + \frac 1 2 \right) y_{ijt} \label{obj:tot-flow-1} \\
    \sum_{i\in M} \sum_{t \ge r_j} \frac{y_{ijt}}{p_{ij}} &= 1 & \forall j\in J \notag \\
    \sum_{j\in J : r_j \ge t} y_{ijt} &\le 1 & \forall i\in M, \quad \forall t \label{cons:capacity-1} \\
    y_{ijt} &\ge 0 \notag
\end{align}
The first set of constraints ensure that each job is processed completely, and the second set of constraints ensure that  each
machine can only process a volume of  at most $1$ at  each unit of time.

By standard discretization techniques we can assume that all processing times and release times
are integers and bounded by $\text{poly}(n)$, up to a negligible $1+o(1)$ factor loss in the approximation ratio, see e.g.~\cite{DBLP:conf/stoc/BansalK15}.
So the time horizon is $\text{poly}(n)$ and the LP also has polynomial size.
Moreover, one may assume that given an integral assignment of jobs of machines, the jobs on each machine are scheduled according to 
shortest-remaining-processing-time (SRPT), as it is the optimal scheduling policy for a single machine. %Thus it suffices to find only an integral assignment of jobs of machines .
%which gives a feasible integer solution to the time indexed LP. 

%Since all processing times and release times are bounded, $\max_j {r_j} + \sum_{ij :  p_{ij} < \infty} p_{ij}$
%forms a polynomial upper bound on the times when an optimal solution processes any job.
%Thus, all variables for later times can be discarded and therefore the
%resulting linear program has polynomial size.
It may not be immediate why the objective function forms a valid lower bound
for the total flow time of this solution, but this holds and we refer the
reader to~\cite{DBLP:conf/stoc/BansalK15} for details.

Bansal and Kulkarni~\cite{DBLP:conf/stoc/BansalK15} introduce a further
relaxation of the LP above by first grouping together for each machine $i$ the
jobs $j$ with $p_{ij} \in (2^{k-1}, 2^k]$. Note that there
are at most $\log P$ non-empty groups, where we recall that $P$ is
the ratio between maximum and minimum finite processing time.
Apart for a few differences that we will point out below,
they describe the following LP, which we will refer to as the \emph{auxiliary LP}.
\begin{align}
    \min \sum_{i\in M} \sum_{k} \sum_{j\in J : p_{ij} \in (2^{k-1}, 2^k]} \sum_{t\ge r_j}
    &\left( \frac{t - r_j}{2^k} + \frac 1 2 \right) y_{ijt} \label{eq:objective} \\
    \sum_{i\in M} \sum_{t \ge r_j} \frac{y_{ijt}}{p_{ij}} &= 1 & \forall j\in J \notag \\
    \sum_{j\in J : p_{ij} \le 2^k} \sum_{t \in [t_1, t_2]} y_{ijt} &\le t_2 - t_1 + \alpha \cdot 2^k & \forall i\in M \ \forall k \ \forall t_1 < t_2 \label{eq:capacity} \\
    y_{ijt} &\ge 0 \notag
\end{align}
Notice that we replaced the capacity constraint \eqref{cons:capacity-1} at time step $t$ in the original LP, by an aggregate capacity constraint over time intervals in \eqref{eq:capacity}.
Here $\alpha$ is a parameter that allows the capacity constraints~(\ref{eq:capacity})
to be slightly violated. We call a solution $\alpha$-relaxed, if it is feasible for $\alpha$.

It is easy to verify that any solution for the previous LP remains feasible
for this LP (with $\alpha = 0$) and the cost can only decrease.
This decrease is due to the slight change in the objective \eqref{eq:objective}, as
we replaced $p_{ij}$ in \eqref{obj:tot-flow-1} by the upper bound $2^k$.
This modification is not used in~\cite{DBLP:conf/stoc/BansalK15}, but has been
in some other related works~\cite{DBLP:conf/focs/GargK07, DBLP:conf/isaac/GargKM08}. It will allow us
to freely rearrange jobs of the same group without increasing
the cost. It should be noted that replacing $p_{ij}$ with $2^k$ can reduce
the cost by at most a factor of~$2$, which is negligible for our purposes.

A crucial and non-trivial ingredient of the proof in~\cite{DBLP:conf/stoc/BansalK15},
which we will reuse here, is that ``integral'' $\alpha$-relaxed solutions
can be used to construct a schedule of only slightly worse cost.
In this case integral means that for every job $j$ we have  $y_{ijt} = p_{ij}$ for exactly one machine $i$ and one particular time  $t$,
and zero everywhere else.
\begin{theorem}[Section 2.2 in~\cite{DBLP:conf/stoc/BansalK15}]
\label{thm:BK-schedule}
Let $y$ be an integral (so that $y_{ijt} \in \{0, p_{ij}\}$),
$\alpha$-relaxed solution to the auxiliary LP.
Then there is a schedule that assigns each jobs $j$ to the machine $i$
with $y_{ijt} = p_{ij}$ for some $t$, where the total flow time is at most
\begin{equation*}
    O(\alpha \log P \cdot \mathrm{LP}) \ .
\end{equation*}
Here $\mathrm{LP}$ is the cost of $y$ with respect to the objective (\ref{eq:objective}).
\end{theorem}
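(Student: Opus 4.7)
The plan is to construct the schedule by processing jobs on each machine $i$ independently using Shortest Remaining Processing Time (SRPT): on machine $i$ we schedule precisely the jobs $J_i = \{j : y_{ijt} = p_{ij} \text{ for some } t\}$, breaking ties arbitrarily. Since SRPT is optimal on a single machine, the resulting total flow time is at most that of any other schedule respecting the same assignment. The goal then reduces to bounding $\sum_{j\in J_i} F_j$ on each machine in terms of the per-machine contribution to the LP objective \eqref{eq:objective}.

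The first step is to bound the flow time of a single job $j \in J_i$ belonging to group $k$, i.e.\ with $p_{ij} \in (2^{k-1}, 2^k]$. Under SRPT, $j$ can only be delayed by jobs $j' \in J_i$ with $p_{ij'} \le p_{ij} \le 2^k$. Let $[s, C_j]$ be the maximal interval ending at the completion time $C_j$ during which the machine is continuously busy with such ``small'' jobs; by maximality all of them have release time in $[s, C_j]$ and $s \le r_j$. Applying the $\alpha$-relaxed interval constraint \eqref{eq:capacity} for group $k$ on $[s, C_j]$ yields
\begin{equation*}
\sum_{\substack{j' \in J_i,\; p_{ij'}\le 2^k \\ r_{j'}\in [s, C_j]}} p_{ij'} \;\le\; (C_j - s) + \alpha \cdot 2^k,
\end{equation*}
which, combined with the fact that the left-hand side equals $C_j - s$ (the machine is busy throughout), forces $C_j - r_j \le (r_j - s) + \alpha \cdot 2^k$. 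The quantity $r_j - s$ can in turn be charged to the LP ``waiting time'' of those smaller jobs released before $r_j$ that block $j$.

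The second step is a charging/amortization argument. Each job $j$ contributes at least $p_{ij}/2$ to the LP objective via the $\tfrac 1 2 y_{ijt}$ term and at least $(t - r_j) \cdot p_{ij}/2^k$ via the delay term. One shows that the per-job additive slack of $O(\alpha \cdot 2^k)$ derived above can be absorbed into the LP cost of jobs in groups $k' \le k$ that are present during the interval $[s, C_j]$. Summing over the $O(\log P)$ groups gives the extra $\log P$ factor, so that $\sum_{j\in J_i} F_j \le O(\alpha \log P)$ times the LP contribution of machine $i$; summing over machines yields the claimed $O(\alpha \log P \cdot \mathrm{LP})$ bound.

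The main obstacle is step two: a single large job can be delayed by a long succession of smaller jobs from many different groups, each contributing its own $\alpha \cdot 2^{k'}$ slack, so a naive charging would overcount. The geometric spacing $p_{ij} \in (2^{k-1}, 2^k]$ is precisely what makes the charging go through, since within a group the LP pays $\Theta(2^k)$ per job and across groups the slacks form a geometric series controlled by $\log P$. Making this amortization fully rigorous, and in particular ensuring that each unit of LP cost is charged only $O(\alpha \log P)$ times, is the delicate part. This is exactly the analysis carried out in Section~2.2 of~\cite{DBLP:conf/stoc/BansalK15}, which we would invoke.
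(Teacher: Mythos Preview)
The paper itself does not prove this theorem; it is quoted verbatim from~\cite{DBLP:conf/stoc/BansalK15} and used as a black box. Your proposal ultimately does the same (you explicitly invoke Section~2.2 of~\cite{DBLP:conf/stoc/BansalK15} at the end), so in that respect it matches the paper.

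That said, the sketch you give before deferring to the reference has a genuine gap. Constraint~\eqref{eq:capacity} bounds the total $y$-mass whose \emph{LP-assigned time} $t$ lies in $[t_1,t_2]$, not the total processing of jobs whose \emph{release time} lies there. For an integral solution each job $j\in J_i$ sits at a single LP time $t_j\ge r_j$, and there is no reason these $t_j$ fall inside your busy interval $[s,C_j]$; the displayed inequality therefore does not follow from~\eqref{eq:capacity}. Moreover, even granting that inequality, your inference is vacuous: if the left-hand side equals $C_j-s$ and is bounded by $(C_j-s)+\alpha 2^k$, you learn nothing about $C_j-r_j$. The actual argument in~\cite{DBLP:conf/stoc/BansalK15} works through the LP time stamps $t_j$ (not release times) to connect the SRPT flow time to the objective~\eqref{eq:objective}; the $(t-r_j)/2^k$ term in the objective is precisely what pays for the gap between $r_j$ and $t_j$, and this is the missing link in your sketch.
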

%Since shortest-remaining-time is optimal for a single machine, one may
%use this to recover the schedule on each machine.
Although at first glance constraint~(\ref{eq:capacity}) looks similar to the capacity constraint
in max flow time and hence promising to apply prefix discrepancy,
there is a crucial difference.
Prefix discrepancy only allows us to obtain bounds on discrepancy of intervals with respect
to a single fixed order of jobs. In the case of max flow time this ordering was by release time.
In total flow time we need bounds with respect to the time the jobs are scheduled, and the problem is that this order may not be the same for each machine.

To circumvent this obstacle, we will use that an LP solution can be normalized
so that all jobs of the same group are scheduled in a globally consistent order.
\begin{definition}[Consistent order property]{\rm
Let $\prec$ be a order on the jobs with $j \prec j' \Rightarrow r_j \le r_{j'}$.
We say that a solution $y$ to the auxiliary LP has the consistent order property (w.r.t.~$\prec$), if
it satisfies for all $k, i, j, j'$ with $p_{i j}, p_{i j'} \in (2^{k-1}, 2^k]$:
\begin{equation*}
    \exists\, t < t' \text{ with } y_{i j t} > 0, y_{i j' t'} > 0 \quad \Rightarrow \quad j \preceq j' \ .
\end{equation*}
}\end{definition}
\begin{lemma}\label{lem:time-index-order}
For every $\alpha$-relaxed solution $y$ to the auxiliary LP, there exists a consistently ordered $\alpha$-relaxed solution $y'$,
which has the same cost and $\sum_{t \ge r_j} y'_{ijt} = \sum_{t \ge r_j} y_{ijt}$ for all jobs $j$ and machines $i$.
\end{lemma}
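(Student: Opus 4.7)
My plan is to fix, once and for all, a total order $\prec$ on $J$ that refines release times (e.g., sort by $(r_j, j)$ lexicographically), and then rearrange $y$ independently within each group $G_{ik} := \{j : p_{ij} \in (2^{k-1}, 2^k]\}$ on each machine~$i$. Within a fixed $(i,k)$, let $j_1 \prec j_2 \prec \cdots \prec j_s$ enumerate $G_{ik}$, put $\mu_h := \sum_t y_{i j_h t}$, and let $S_{ik}(t) := \sum_{j \in G_{ik}} y_{ijt}$ be the aggregate schedule of the group on this machine. I reallocate mass in a left-to-right sweep: give the earliest $\mu_1$ units of $S_{ik}$-mass to $j_1$, the next $\mu_2$ to $j_2$, and so on, splitting a single slot between two consecutive jobs when necessary. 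Performing this independently for each $(i,k)$ defines~$y'$.

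The easy properties follow from two invariants of the rearrangement that hold for every $(i,k)$: the per-job total $\sum_t y'_{ijt} = \mu_j$ is preserved by construction, and the per-time group marginal $\sum_{j \in G_{ik}} y'_{ijt} = S_{ik}(t)$ is preserved because we only reallocate $S_{ik}(t)$ among jobs of $G_{ik}$. Mass preservation gives $\sum_{t \ge r_j} y'_{ijt} = \sum_{t \ge r_j} y_{ijt}$. The capacity constraint~(\ref{eq:capacity}) is maintained since $\sum_{j : p_{ij} \le 2^k} \sum_{t \in [t_1,t_2]} y'_{ijt} = \sum_{k' \le k} \sum_{t \in [t_1,t_2]} S_{ik'}(t)$, which is identical for $y$ and $y'$. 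The cost contribution of group $(i,k)$ expands as $\frac{1}{2^k} \sum_t t\, S_{ik}(t) - \frac{1}{2^k} \sum_{j \in G_{ik}} r_j \mu_j + \frac{1}{2} \sum_{j \in G_{ik}} \mu_j$, and every summand depends only on the preserved quantities $S_{ik}(t)$ and $\mu_j$. The consistent order property itself is immediate: within any $G_{ik}$ a $\prec$-smaller job is assigned exclusively earlier slots than a $\prec$-larger one.

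The one place requiring care is release-time feasibility: each slot $t$ with $y'_{i j_h t} > 0$ must satisfy $t \ge r_{j_h}$. By construction, these slots lie to the right of the point where cumulative $S_{ik}$-mass first reaches $\mu_1 + \cdots + \mu_{h-1}$, so it suffices to show $\sum_{t < r_{j_h}} S_{ik}(t) \le \mu_1 + \cdots + \mu_{h-1}$. Any nonzero $y_{ijt}$ contributing to $S_{ik}(t)$ with $t < r_{j_h}$ forces $r_j \le t < r_{j_h}$; since $\prec$ refines release times, $r_j < r_{j_h}$ implies $j \prec j_h$, so $j \in \{j_1,\dots,j_{h-1}\}$. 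Therefore $\sum_{t < r_{j_h}} S_{ik}(t) \le \sum_{h' < h} \mu_{h'}$, which is exactly what is needed. This small counting argument, leveraging the fact that $\prec$ is chosen to refine $r$ \emph{before} any rearrangement is carried out, is the only nontrivial step in the proof.
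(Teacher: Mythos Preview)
Your proof is correct and takes a genuinely different route from the paper. The paper argues by repeated local exchanges: whenever $y_{ijt}>0$ and $y_{ij't'}>0$ with $t<t'$ but $j'\prec j$ (within the same group on the same machine), it swaps $\delta=\min\{y_{ijt},y_{ij't'}\}$ units between the two (job,\,time) pairs and repeats until no violation remains. Your approach instead constructs $y'$ in one shot by ``sorting'' the mass of each group along the time axis, which sidesteps the termination argument the exchange method needs. The only extra work you incur is the release-time feasibility check, which you dispatch cleanly via the cumulative-mass inequality $\sum_{t<r_{j_h}} S_{ik}(t)\le \mu_1+\cdots+\mu_{h-1}$; the exchange argument gets this for free because each swap only moves a job to a time at which another job of no larger release time was already scheduled. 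Both proofs rest on exactly the same two invariants---the per-job totals $\mu_j$ and the per-time group marginals $S_{ik}(t)$---to conclude that cost and the capacity constraints~(\ref{eq:capacity}) are preserved.
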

\begin{proof}
This follows from a simple exchange argument.
Suppose that for some $t < t'$ we have that $y_{ijt} > 0$ and $y_{i j' t'} > 0$, but $j' \prec j$.
In particular, it holds that $r_{j'} \le r_{j} \le t < t'$.
This implies that both jobs can be assigned to both times.
Let $\delta = \min\{y_{i j t}, y_{i j' t'}\}$.
We increase $y_{i j' t}$ by $\delta$ and reduce $y_{i j t}$ by the same amount.
Likewise, increase $y_{i j t'}$ by $\delta$ and reduce $y_{i j' t'}$ by it.

Clearly, the total amount of each job that is scheduled on $i$ does not change
and also the amount of processing of each group done by $i$
at each unit of time does not change.
Thus, the solution remains feasible and $\alpha$-relaxed. Further,
the coefficient of $y_{i j t}$ and $y_{i j' t}$ (also $y_{i j t'}$ and $y_{i j' t'}$)
are the same in the objective function, because the jobs belong to the
same group and we have rounded the denominator to $2^k$.
Consequently, also the solution value did not change.

As each such exchange reduces the number of violations, after finitely many exchanges, the condition of the lemma must hold.
\end{proof}

\subsection{Reducing to half-integral solutions}
We now show that it is sufficient to give a rounding algorithm
for half-integral solutions.
\begin{lemma}
\label{lem:totalflow-reduction}
Let $f(n, m)$ be a non-decreasing function such that for any
instance and any half-integral ($y^*_{ijt} \in \{0, p_{ij}/2, p_{ij}\}$),
$\alpha$-relaxed solution $y^*$ of the auxiliary LP
there exists an integral $(\alpha + f(n, m))$-relaxed solution $y$ of cost no greater than that of $y^*$.
Then the integrality gap of the time indexed LP is at most
\begin{equation*}
    O(f(n^2, m) \cdot \log P) \ .
\end{equation*}
\end{lemma}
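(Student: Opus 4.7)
The plan is to adapt the strategy of Lemma~\ref{lem:maxflow-reduction}, iteratively halving the granularity of a fractional solution and invoking the hypothesis on a companion instance. Take an optimal solution $y^*$ of the time-indexed LP of cost $\mathrm{LP}$, reinterpret it as a $0$-relaxed solution of the auxiliary LP of cost at most $\mathrm{LP}$ (the cost only decreases since $p_{ij}\le 2^{k_{ij}}$), and use Lemma~\ref{lem:time-index-order} to normalize it to be consistently ordered at no extra cost. Round $y^*$ to a solution $y^{(\ell)}$ with $\ell=\lceil\log_2 n\rceil$ in which every $y^{(\ell)}_{ijt}/p_{ij}$ is an integer multiple of $1/2^\ell$, by the same iterative balancing trick as in Lemma~\ref{lem:maxflow-reduction}. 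Because each adjustment changes one $y_{ijt}$ by at most $p_{ij}/2^\ell$, the LHS of any capacity constraint $\sum_{j:p_{ij}\le 2^k}\sum_{t\in[t_1,t_2]}y_{ijt}$ shifts by at most $n\cdot 2^k/2^\ell\le 2^k$, so $y^{(\ell)}$ is $O(1)$-relaxed and its cost is unchanged up to a negligible amount.

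For $h=\ell,\ell-1,\dotsc,1$, I would reduce $(1/2^h)$-integrality to $(1/2^{h-1})$-integrality by building a companion instance $I_h$: for each job $j$ view the nonzero entries of $y^{(h)}$ as $2^h$ tokens (each slot $(i,t)$ appearing with multiplicity $2^h y^{(h)}_{ijt}/p_{ij}$), form an arbitrary perfect matching into $2^{h-1}$ pairs, and introduce one new job $j'=j'(j,(i_1,t_1),(i_2,t_2))$ per pair with release time $r_j$, processing time $p_{i_s j'}=p_{i_s j}/2^{h-1}$ on $i_s$ and infinity elsewhere; so $I_h$ has $2^{h-1}n\le n^2$ jobs. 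Then $y'_{i_s j' t_s}:=p_{i_s j'}/2$ is a half-integral solution of the auxiliary LP for $I_h$; a direct expansion shows that the LHS of the $I_h$-capacity constraint for group $k'$ and interval $[t_1,t_2]$ equals the LHS of the original constraint for group $k=k'+h-1$, while the slack $\alpha'\cdot 2^{k'}$ in $I_h$ equals $(\alpha'/2^{h-1})\cdot 2^k$ in the original. Hence $y'$ is $(2^{h-1}\alpha_h)$-relaxed in $I_h$, and applying the hypothesis yields an integral $(2^{h-1}\alpha_h+f(n^2,m))$-relaxed solution $y''$ of $I_h$ of cost no greater than that of $y'$. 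Lifting $y''$ back by $y^{(h-1)}_{ijt}:=\sum_{j'\text{ placed at }(i,t)}p_{ij'}$ gives a $(1/2^{h-1})$-integral $\alpha_{h-1}$-relaxed solution of the original auxiliary LP with $\alpha_{h-1}=\alpha_h+f(n^2,m)/2^{h-1}$. Summing the geometric series, $\alpha_0=O(f(n^2,m))$.

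Feeding $y^{(0)}$ into Theorem~\ref{thm:BK-schedule} yields a schedule of total flow time $O(\alpha_0\log P\cdot\mathrm{cost}(y^{(0)}))=O(f(n^2,m)\log P\cdot\mathrm{LP})$, the claimed bound on the integrality gap. The main subtlety I would grind through is the cost accounting across these $\log n$ reduction steps: an expansion of the objective shows that the cost of $y'$ in $I_h$ equals $2^{h-1}$ times the ``time-part'' $\sum y_{ijt}(t-r_j)/2^{k_{ij}}$ of the original cost of $y^{(h)}$ plus the unchanged ``const-part'' $\tfrac12\sum y_{ijt}$, rather than the original cost itself. The hypothesis's cost-preservation in $I_h$ therefore translates into the bound $\mathrm{cost}_I(y^{(h-1)})\le\mathrm{cost}_I(y^{(h)})+(1-1/2^{h-1})[\mathrm{const}(y^{(h-1)})-\mathrm{const}(y^{(h)})]$, and a weighted telescoping of this inequality — exploiting that the const-part is dominated by the cost itself, and that the consistent ordering from Lemma~\ref{lem:time-index-order} lets each pair use slots with comparable processing times so the const-part evolves in a controlled way — keeps $\mathrm{cost}(y^{(0)})=O(\mathrm{LP})$, completing the proof.
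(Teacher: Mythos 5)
Your proposal mirrors the paper's route: normalize via Lemma~\ref{lem:time-index-order}, round to $1/2^\ell$-granularity, and for $h=\ell,\dotsc,1$ build a companion instance with processing times scaled by $1/2^{h-1}$, apply the half-integral rounding hypothesis, and lift back; your explicit token-pairing is a faithful concretization of what the paper only sketches. You are also right, and to your credit, that cost accounting across the $\log n$ levels is the delicate point, which the paper's own write-up leaves implicit. Your expansion of the $I_h$-objective is correct: a piece $j'$ with $p'_{ij'}=p_{ij}/2^{h-1}$ lands in class $k'=k_{ij}-h+1$, so the $I_h$-cost of the lifted half-integral solution is $2^{h-1}T(y^{(h)})+C(y^{(h)})$ rather than $T(y^{(h)})+C(y^{(h)})$, and the hypothesis's cost guarantee only yields $2^{h-1}T(y^{(h-1)})+C(y^{(h-1)})\le 2^{h-1}T(y^{(h)})+C(y^{(h)})$.

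Where the proposal has a genuine gap is the last sentence. The inequality you derive, $\mathrm{cost}(y^{(h-1)})\le\mathrm{cost}(y^{(h)})+(1-1/2^{h-1})\bigl[C(y^{(h-1)})-C(y^{(h)})\bigr]$, does not telescope to $\mathrm{cost}(y^{(0)})=O(\mathrm{LP})$ without a uniform $O(\mathrm{LP})$ bound on the intermediate const-parts $C(y^{(h)})$; Abel summation gives $\mathrm{cost}(y^{(0)})\le\mathrm{cost}(y^{(\ell)})+\sum_{h=1}^{\ell-1}C(y^{(h)})/2^h$ modulo lower-order terms, and bounding $C(y^{(h)})\le\mathrm{cost}(y^{(h)})$ is circular. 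The invocation of Lemma~\ref{lem:time-index-order} does not close this: consistent ordering concerns the order, by release time, of jobs within one machine/class pair and says nothing about the relative sizes of $p_{i_1j}$ and $p_{i_2j}$ inside a token pair, so it cannot control how the const-part drifts when the rounding pushes a piece onto the machine with the larger processing time. A correct argument needs something concrete here — for example, pairing tokens so that cross-machine pairs only appear for the single odd residue per machine, or directly proving that the hypothesis can be invoked with the lifted (original-scale) objective rather than $I_h$'s own — and as written neither your proposal nor the paper supplies it.
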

\begin{proof}
Let $y^*$ be an optimal, $0$-relaxed solution to the auxiliary LP.
We start by augmenting $y^*$ slightly to obtain a $1$-relaxed solution
$y^{(\ell)}$ where each variable $y^{(\ell)}_{ijt}$ is an integer multiple
of $p_{ij} / 2^{\ell}$ for $\ell = \lceil \log_2 n \rceil$:
Let $j$ be a job and suppose not all its variables are integer multiples
of $p_{ij} / 2^{\ell}$. Since $\sum_{i,t} y^{(\ell)}_{ijt} / p_{ij} = 1$
there must be at least two such variables $y^{(\ell)}_{ijt}$ and $y^{(\ell)}_{i'jt'}$.
We either increase $y^{(\ell)}_{ijt}$ by $\delta p_{ij}$ and decrease
$y^{(\ell)}_{i'jt'}$ by $\delta p_{i'j}$ or vice versa, where we choose
$\delta > 0$ minimal such that one of the two variables becomes an integer multiple.
We select the direction that does not increase
the cost. This procedure reduces the number
of variables that are not integer multiples each iteration.
Therefore eventually all variables are integer multiples of $p_{ij} / 2^\ell$.
Notice also that each variable $y^{(\ell)}_{ijt}$
changes by at most $p_{ij} / 2^\ell \le p_{ij} / n$.
Thus, the left-hand side of~(\ref{eq:capacity})
increases by at most $n \cdot 2^k / 2^\ell \le 2^k$ and $y^{(\ell)}$ is $1$-relaxed.

We now argue inductively that for every $h = \ell,\ell-1,\dotsc,0$
there exists a $\alpha$-relaxed solution $y^{(h)}$, where each variable $y^{(h)}_{ijt}$
is an integer multiple of $p_{ij} / 2^{h}$, for
$\alpha = f(n^2, m) / 2^{h-1} + 1$. By construction of $y^{(\ell)}$
this holds for the base case.
Now assume we are given $y^{(h)}$ for some $h$. Using the rounding
of a half-integral solution, we will construct $y^{(h-1)}$.
To this end, we define a new instance of total flow time by splitting
each job into $2^{h - 1}$ equal pieces:
A job $j$ is split into jobs $j'$ with $p'_{ij'} = p_{ij} / 2^{h - 1}$
for all $i$. 

One can interpret $y^{(h)}$ as a half-integral solution to this new instance.
Similarly, an integral solution to the
new instance corresponds to a solution $y^{(h-1)}$ for the original
instance, where each $y^{(h-1)}_{ijt}$ is an integer multiple of
$p_{ij} / 2^{h-1}$. We use the rounding from the premise of the lemma
to produce this $y^{(h-1)}$. It remains to bound the increase
in the left-hand side of constraint~(\ref{eq:capacity}).

Consider a machine $i$ and a class $k$.
All jobs $j$ with $p_{ij} \in (2^{k-1}, 2^k]$ are split into
jobs $j'$ with $p'_{ij} \in (2^{k-h}, 2^{k - h + 1}]$.
Hence, the rounding will increase the right-hand side of~(\ref{eq:capacity})
for this group only by $f(n^2, m) \cdot 2^{k - h + 1}$. Here we
use that the number of jobs in the new instance is at most $2^{\ell-1} n \le n^2$.
It follows that
\begin{align*}
    \sum_{j: p_{ij}\le 2^k} \sum_{t\in [t_1, t_2]} y^{(h - 1)}_{ijt} 
    &\le t_2 - t_1 + 2^{k - h + 1} f(n^2, m) + 2^k + f(n^2, m) \cdot 2^{k - h + 1} \\
    & \le t_2 - t_1 + 2^{k - h + 2} f(n^2, m) + 2^k \ .
\end{align*}
Applying Theorem~\ref{thm:BK-schedule} to $y^{(0)}$ finishes the proof.
\end{proof}
\subsection{Rounding half-integral solutions}
In this section we describe how to round half-integral solutions.
\begin{lemma}
\label{lem:totalflow-halfintegral}
Let $g(n, m)$ be a non-decreasing bound on prefix Beck-Fiala.
Further, let $y^*$ be a half-integral
(with $y^*_{ijt} \in \{0, p_{ij} / 2, p_{ij}\}$),
$\alpha$-relaxed solution for the auxiliary LP.
Then there is an integral $(\alpha + O(g(n, m \log P)))$-relaxed integral solution.
\end{lemma}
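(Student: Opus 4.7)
The plan is to mirror the max flow time proof of Lemma~\ref{lem:maxflow-halfintegral}, namely to define a prefix Beck-Fiala instance whose discrepancy controls the relaxation increase of the rounding. The auxiliary LP has constraints parametrised by (machine, group, interval), so we must be careful about how the jobs are ordered. First, I would invoke Lemma~\ref{lem:time-index-order} to replace $y^*$ by an equivalent half-integral $\alpha$-relaxed solution (with the same cost) that is consistently ordered with respect to some total order $\prec$ of the jobs refining the release times. Let $J_2$ denote the set of half-integrally assigned jobs; each $j \in J_2$ has exactly two fractional positions $(i_h, t_h^*)$ with $y^*_{i_h j t_h^*} = p_{i_h j}/2$, belonging to groups $k_h$ defined by $p_{i_h j} \in (2^{k_h-1}, 2^{k_h}]$.

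For each $j \in J_2$ I would define a vector $v^{(j)} \in \R^{m \log P}$ indexed by (machine, group): in the generic case $(i_1, k_1) \ne (i_2, k_2)$, set $v^{(j)}_{(i_1, k_1)} = +p_{i_1 j}/(2 \cdot 2^{k_1})$, $v^{(j)}_{(i_2, k_2)} = -p_{i_2 j}/(2 \cdot 2^{k_2})$, and all other entries zero; a degenerate job with $(i_1, k_1) = (i_2, k_2)$ is deferred. Each vector has $\ell_1$-norm at most $1$, so applying the prefix Beck-Fiala guarantee in $\prec$-order yields signs $\epsilon_j \in \{-1, +1\}$ such that every prefix sum has $\ell_\infty$-norm at most $g(n, m \log P)$. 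The resulting integral solution $y$ keeps all integrally-assigned jobs of $y^*$, places each generic $j \in J_2$ fully on half~$1$ when $\epsilon_j = +1$ and on half~$2$ when $\epsilon_j = -1$, and places each degenerate job on its earlier half.

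The key step is verifying the capacity constraint for an arbitrary triple $(i, k, [t_1', t_2'])$. I would decompose the change in its LHS from $y^*$ to $y$ by the group $k^* \le k$ of each contributing half. By the consistent-order property, the generic contributors for a fixed $k^*$---i.e., jobs of $J_2$ with a half on $(i, k^*, t^*)$ with $t^* \in [t_1', t_2']$---form a $\prec$-interval among the jobs of $J_2$ having a half on $(i, k^*)$, while all remaining jobs of $J_2$ have zero value on coordinate $(i, k^*)$. Hence the group-$k^*$ contribution equals, up to the scaling factor $2^{k^*}$, a difference of two $\prec$-prefix sums of the $(i, k^*)$ coordinate, and is therefore bounded by $2 g(n, m \log P) \cdot 2^{k^*}$. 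Summing geometrically over $k^* \le k$ yields $O(g(n, m \log P)) \cdot 2^k$. Degenerate jobs add only an $O(2^k)$ extra term: the consistent-order property forces that no other job of $(i, k^*)$ is scheduled strictly between the two halves of a degenerate job, so at most one degenerate pair per $(i, k^*)$ can straddle each endpoint of $[t_1', t_2']$. For the LP cost, I would note that negating all signs $\epsilon_j$ also satisfies the prefix Beck-Fiala bound, and that the LP costs of the two resulting candidate integral solutions sum to twice the LP cost of $y^*$; picking whichever is smaller gives an integral solution of cost at most that of $y^*$. The main obstacle is bridging the aggregated capacity constraint (summing over all groups $\le k$) and the per-coordinate prefix Beck-Fiala bound, which is resolved by assigning one coordinate per individual group $(i, k^*)$, using the consistent order to rewrite each per-group interval sum as a prefix difference, and summing the resulting geometric series across $k^*$.
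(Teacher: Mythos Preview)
Your high-level strategy matches the paper's: reorder via Lemma~\ref{lem:time-index-order}, build a $2$-sparse prefix Beck-Fiala instance indexed by (machine, group), round according to the signs, control each group's contribution to~(\ref{eq:capacity}) as a difference of two prefix sums, and then sum geometrically over $k^*\le k$. The sign-flipping argument for the objective is also exactly what the paper does (via $y'' = \tfrac12 y + \tfrac12 y'$).

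There is, however, a real gap in your treatment of the time coordinate. Lemma~\ref{lem:time-index-order} only guarantees that $\sum_t \bar y_{ijt} = \sum_t y^*_{ijt}$; it does \emph{not} preserve half-integrality of the individual time variables. The exchange step in its proof moves $\delta = \min\{y_{ijt}, y_{ij't'}\}$ between two jobs of the same group, and since $p_{ij}$ and $p_{ij'}$ can differ within a group, the resulting entries need not lie in $\{0, p_{ij}/2, p_{ij}\}$. So after invoking the lemma you cannot speak of ``exactly two fractional positions $(i_h, t_h^*)$'' or of ``degenerate jobs whose two halves straddle an endpoint''; a job's processing on a machine may now be spread over many time slots. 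The paper addresses this by working with the reordered solution $\bar y$ only at the level of machine assignment (which does remain half-integral) and then placing each job at the \emph{earliest} time $t$ with $\bar y_{ijt} > 0$ on its chosen machine. This buys two things: (i) $y_{ijt} > 0 \Rightarrow \bar y_{ijt} > 0$, so any group-$k$ job contributing to $[t_1,t_2]$ in $y$ lies in the $\prec$-interval $[j_1,j_2]$ determined from $\bar y$; and (ii) the mismatch between ``processed in $[t_1,t_2]$'' and ``$j_1 \preceq j \preceq j_2$'' is absorbed by a single $+2\cdot 2^k$ boundary term for $j_1,j_2$, which cleanly replaces your separate degenerate-job analysis. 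Once you adopt this earliest-time placement and drop the half-integral-in-time assumption, your argument becomes the paper's.
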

\begin{proof}
We sort the jobs by release time and break ties arbitrarily.
Let $\prec$ denote the resulting total order.
By Lemma~\ref{lem:time-index-order} there is a consistently ordered (w.r.t.~$\prec$) solution $\bar y$ with
$\sum_{t \ge r_j} \bar y_{ijt} \in \{0, p_{ij} / 2, p_{ij}\}$ for every job~$j$ and machine~$i$.
Although the time assignments might not be half-integral in $\bar y$, the assignments to machines are.

We construct an instance of the prefix discrepancy as follows.
Denote by $J_2$ all jobs which are partially assigned to two different machines. Let $j\in J_2$ and $i_1,i_2$ be the machines
with
$\sum_{t \ge r_j} \bar y_{i_1 j t} = p_{i_1 j} / 2$
and $\sum_{t \ge r_j} \bar y_{i_2 j t} = p_{i_2 j} / 2$.
Define $v^{(j)} \in \mathbb R^{m \times \log P}$ with
\begin{equation*}
    v^{(j)}_{ik} = \begin{cases}
      p_{ij}/2^{k+1} &\text{ if } i = i_1 \text{ and } p_{i_1 j} \in (2^{k-1}, 2^k] \\
      - p_{ij}/2^{k+1} &\text{ if } i = i_2 \text{ and } p_{i_2 j} \in (2^{k-1}, 2^k] \\
      0 &\text{ otherwise. }
    \end{cases}
\end{equation*}
Here, the choice which component is positive and which is negative
is made arbitrarily. We order these vectors by $\prec$
and obtain signs $\epsilon_j$ with a prefix bound of~$g(n, m)$.
Next, we elaborate how to derive the integral solution~$y$.

For every job $j\notin J_2$, we have that $j$ is completely assigned to a single machine $i$, but potentially to different times.
We will assign it to the earliest time it is assigned to in $\bar y$.
Formally, let $t$ be minimal with $\bar y_{ijt} > 0$.
We set $y_{ijt}$ to $p_{ij}$ and all other variables for $j$ to zero.

Next consider a job $j\in J_2$. The job has two potential machines to be assigned to and the sign $\epsilon_j$
determines which one we choose.
More precisely, let $i_1, i_2$ be the machines for which
$\sum_{t \ge r_j} \bar y_{i_1 j t} = p_{i_1 j} / 2$
and $\sum_{t \ge r_j} \bar y_{i_2 j t_2} = p_{i_2 j} / 2$.
Further, let $t_1$ and $t_2$ be the earliest times $j$ is assigned to on machine $i_1$ and $i_2$.
If $\epsilon_j v^{(j)}_{i_1 k} > 0$ for the corresponding $k$, we set $y_{i_1 j t_1} = p_{i_1 j}$. Otherwise,
set $y_{i_2 j t_2} = p_{i_2 j}$. Again, all other variables for $j$ are set to zero.
The solution $y$ then satisfies for all $i$, $k$, $j_1 \preceq j_2$ that
\begin{align}
    \sum_{\substack{j: j_1 \preceq j \preceq j_2 \\ p_{ij} \in (2^{k-1}, 2^k]}} \sum_{t \ge r_j} y_{ijt}
    &= \sum_{\substack{j\notin J_2: j_1 \preceq j \preceq j_2 \\ p_{ij} \in (2^{k-1}, 2^k]}} \sum_{t \ge r_j} y_{ijt}
    + \sum_{\substack{j\in J_2: \preceq j \preceq j_2 \\ p_{ij} \in (2^{k-1}, 2^k]}} \sum_{t \ge r_j} y_{ijt} \notag \\
    &= \sum_{\substack{j\notin J_2: j_1 \preceq j \preceq j_2 \\ p_{ij} \in (2^{k-1}, 2^k]}} \sum_{t \ge r_j} \bar y_{ijt}
    + \sum_{\substack{j\in J_2: j_1 \preceq j \preceq j_2 \\ p_{ij} \in (2^{k-1}, 2^k]}} \bigg[ 2^k \epsilon_j v^{(j)}_{ik} + \sum_{t \ge r_j} \bar y_{ijt} \bigg] \notag \\
    &= \sum_{\substack{j: j_1 \preceq j \preceq j_2 \\ p_{ij} \in (2^{k-1}, 2^k]}} \sum_{t \ge r_j} \bar y_{ijt}
    + \sum_{j\in J_2: j \preceq j_2} 2^k \epsilon_j v^{(j)}_{ik}
    - \sum_{j\in J_2: j \prec j_1} 2^k \epsilon_j v^{(j)}_{ik} \notag \\
    &\le \sum_{\substack{j: j_1 \preceq j \preceq j_2 \\ p_{ij} \in (2^{k-1}, 2^k]}} \sum_{t \ge r_j} \bar y_{ijt} + 2 g(n, m) \cdot 2^k \ . \label{eq:prefix-totalflow}
\end{align}
It remains to relate this bound to the constraints~(\ref{eq:capacity}).
To this end consider some $k,i,t_1 < t_2$.
Furthermore, denote by $j_1$ the smallest (by $\prec$) job of group $k$
with $\bar y_{ijt} > 0$ for some $t\in [t_1, t_2]$.
Likewise, let $j_2$ denote the largest such job.
By the consistent order property of $y$,
no job $j$ of group $k$ with $j_1 \prec j \prec j_2$ is processed outside $[t_1, t_2]$ on $i$, in particular,
\begin{equation*}
    \sum_{j: p_{ij} \in (2^{k-1}, 2^k]} \sum_{t \in [t_1, t_2]} \bar y_{ijt} \ge \sum_{\substack{j: j_1 \prec j \prec j_2 \\ p_{ij} \in (2^{k-1}, 2^k]}} \sum_{t \ge r_j} \bar y_{ijt} \ge \sum_{\substack{j: j_1 \preceq j \preceq j_2 \\ p_{ij} \in (2^{k-1}, 2^k]}} \sum_{t \ge r_j} \bar y_{ijt} - 2 \cdot 2^k \ .
\end{equation*}
Furthermore, all jobs $j$ with $y_{ijt} > 0$ for some $t\in [t_1, t_2]$ must also satisfy
$\bar y_{ijt} > 0$ and, in particular, $j_1 \preceq j \preceq j_2$.
This implies
\begin{align*}
    \sum_{j : p_{ij} \in (2^{k-1}, 2^k]} \sum_{t \in [t_1, t_2]} y_{ijt}
    &\le \sum_{\substack{j: j_1 \preceq j \preceq j_2 \\ p_{ij} \in (2^{k-1}, 2^k]}} \sum_{t \ge r_j} y_{ijt} \\
    &\le \sum_{\substack{j: j_1 \preceq j \preceq j_2 \\ p_{ij} \in (2^{k-1}, 2^k]}} \sum_{t \ge r_j} \bar y_{ijt} + 2 g(n, m) \cdot 2^k \\
    &\le \sum_{j : p_{ij} \in (2^{k-1}, 2^k]} \sum_{t \in [t_1, t_2]} \bar y_{ijt} + 2 g(n, m) \cdot 2^k + 2 \cdot 2^k \ .
\end{align*}
Since this holds for any $k$, summing over all $k' \le k$ yields
\begin{align*}
    \sum_{j : p_{ij} \le 2^k} \sum_{t \in [t_1, t_2]} y_{ijt}
    &\le \sum_{j : p_{ij} \le 2^k} \sum_{t \in [t_1, t_2]} \bar y_{ijt} + O(g(n, m) \cdot 2^k) \\
    &\le t_2 - t_1 + \alpha \cdot 2^k + O(g(n, m) \cdot 2^k) \ .
\end{align*}
Notice that the cost of the solution $y$ may be higher than that of $\bar y$ (equivalently, that of $y^*$),
but in this case we can perform the same construction with $-\epsilon_j$ (all signs flipped). Denote this solution
by $y'$.
Consider now the solution $y''$ obtained from $\bar y$ where for all jobs $j$ and machines $i$ we set
$y''_{ijt_1} = \sum_{t \ge r_j} \bar y_{ijt}$ for the earliest time $t_1$ with $\bar y_{ijt_1} > 0$.
All other variables are set to zero.
Clearly, the cost of $y''$ is not more than that of $\bar y$, since we only moved jobs to earlier times.
By construction of $y$ and $y'$ we have that $y'' = 1/2 \cdot y + 1/2 \cdot y'$.
Thus, the cost of one of the two solutions has to be lower than that of $\bar y$.
We conclude that there is a $\alpha'$-relaxed integral solution for $\alpha' = \alpha + O(g(n, m))$.
\end{proof}
Applying Banaszczyk's bound on prefix Beck-Fiala~\cite{DBLP:journals/rsa/Banaszczyk12}, we may choose $g(n, m) = O(\sqrt{\log n})$ and
Lemmas~\ref{lem:totalflow-reduction} and~\ref{lem:totalflow-halfintegral} 
together imply that the integrality gap
of the time indexed LP is at most $O(\sqrt{\log n} \log P)$.

\section{Equivalence of max flow time and $2$-sparse prefix discrepancy}
\label{sec:relation}
In this section we make the specific form of the prefix discrepancy problem
in the previous proofs explicit.
Namely, we show an equivalence between the integrality gap of max flow time
and a one-sided discrepancy bound on $2$-sparse vectors with $\ell_1$-norm
bounded by~$1$.
This implies a non-trivial relation between the
integrality gaps of the assignment LP for max flow time and
the time indexed LP for total flow time. In addition, it motivates the study of prefix Beck-Fiala in the seemingly easy case of $2$-sparse vectors (Conjecture~\ref{conj:2-sparse}). We further explore techniques for that special case in Section~\ref{sec:game}.

Let $V_m \subseteq \mathbb R^m$ denote the set of all vectors $v$ of the form
\begin{equation*}
    v_i = \begin{cases}
      p_1 &\text{ if } i = i_1 \\
      -p_2 &\text{ if } i = i_2 \\
      0 &\text{ otherwise.}
    \end{cases}
\end{equation*}
where  $p_1, p_2\in [0, 1/2]$ and $i_1,i_2\in [m]$ with $i_1 \neq i_2$.
Let $g(n, m)$ denote the infimum over all $C$
such that for every sequence of vectors $v^{(1)},v^{(2)},\dotsc,v^{(n)} \in V_m$
there are signs $\epsilon_1,\epsilon_2,\dotsc,\epsilon_n$
with
\begin{equation*}
    \epsilon_k v_i^{(k)} + \epsilon_{k+1} v_i^{(k+1)} + \cdots + \epsilon_\ell v_i^{(\ell)} \le C
\end{equation*}
for every $1 \le k < \ell \le m$ and $1 \le i \le m$.
Since vectors in $V_m$ have $\ell_1$-norm at most~$1$ and we can obtain
bounds on all intervals by subtracting two prefixes, $g(n, m)$
is bounded by two times the prefix Beck-Fiala discrepancy (when considering $2$-sparse vectors).
We consider here the special form with intervals instead of prefixes,
because, as we will show, it is
equal (up to constants) to the integrality gap of max flow time.
Moreover, it also forms an upper bound on total flow time.

\begin{theorem}
  The integrality gap of the assignment LP for max flow time is at
  most $O(g(n, m))$ and that of the time indexed LP for
  total flow time is at most $O(g(n, m \log P) \cdot \log P)$.
\end{theorem}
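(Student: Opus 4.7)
The plan is to observe that both bounds follow from essentially the same arguments as in Sections~\ref{sec:max_flow_time} and~\ref{sec:total_flow_time}, specialized to a closer inspection of the shape of the vectors to which prefix Beck-Fiala is actually applied. In both proofs, the vectors $v^{(j)}$ already have a very specific $2$-sparse structure that places them in $V_m$ or $V_{m \log P}$, and the only quantity one needs to bound is a signed interval sum on a single coordinate (not a full $\ell_\infty$-norm of a prefix), so the one-sided interval definition of $g$ is exactly what is required.

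For max flow time, I would revisit the construction in Lemma~\ref{lem:maxflow-halfintegral}. Each vector $v^{(j)}$ has a single positive entry $p_{i_1 j}/(2 p_{\max}) \in [0, 1/2]$ and a single negative entry $-p_{i_2 j}/(2 p_{\max}) \in [-1/2, 0]$, so $v^{(j)} \in V_m$. Since jobs in $J_2$ are listed in release-time order, the set $\{j \in J_2 : r_j \in [t_1, t_2]\}$ is a contiguous block of that ordering, so the sum appearing in~\eqref{eq:prefix-maxflow} is a signed interval sum on coordinate $i$ that is bounded directly by $g(n,m)$. Plugging the resulting additive bound $O(g(n,m)) \cdot p_{\max}$ for half-integral solutions into Lemma~\ref{lem:maxflow-reduction} yields an $O(g(n^2, m))$ multiplicative integrality gap for max flow time; since the interesting regime for $g$ is polylogarithmic (e.g., the Banaszczyk bound $O(\sqrt{\log n})$ or the conjectured $O(1)$), this simplifies to $O(g(n, m))$ as claimed.

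For total flow time, the same inspection applies to Lemma~\ref{lem:totalflow-halfintegral}. Each vector $v^{(j)} \in \mathbb R^{m \log P}$ has one positive entry of magnitude $p_{i_1 j}/2^{k_1+1} \in (1/4, 1/2]$ and one negative entry of magnitude $-p_{i_2 j}/2^{k_2+1} \in [-1/2, -1/4)$, indexed in the product space of machines and processing-time classes, so $v^{(j)} \in V_{m \log P}$. Jobs are consistently ordered by the release-time extension $\prec$, and the key bound~\eqref{eq:prefix-totalflow} is a one-sided upper bound on a signed interval sum on a single coordinate, so the general prefix Beck-Fiala bound there can be replaced by $g(n, m \log P)$. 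Combining this with Lemma~\ref{lem:totalflow-reduction} and Theorem~\ref{thm:BK-schedule} (the latter contributes the $\log P$ factor) yields $O(g(n, m \log P) \cdot \log P)$.

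The main subtlety to take care of is the sign convention in the construction of $v^{(j)}$: the definition of $g$ is one-sided (upper bound only on each coordinate), while the proofs in the earlier sections implicitly used a two-sided $\ell_\infty$-norm bound. However, the LP interval constraints themselves are one-sided — we only need to upper bound the excess load placed on each machine, never to lower bound the load removed — so the choice of which of the two entries of $v^{(j)}$ is assigned the positive sign can be made in a way that aligns with the direction $g$ bounds. Once this alignment is verified, both arguments go through verbatim, and no further ideas are required.
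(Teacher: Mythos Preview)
Your proposal is correct and follows essentially the same approach as the paper's own proof: both simply observe that the vectors constructed in Lemmas~\ref{lem:maxflow-halfintegral} and~\ref{lem:totalflow-halfintegral} lie in $V_m$ (resp.\ $V_{m\log P}$) and that the only place the prefix bound is invoked, namely~\eqref{eq:prefix-maxflow} and~\eqref{eq:prefix-totalflow}, is a one-sided upper bound on a single coordinate over an interval, which is precisely what $g$ controls. Your discussion of the $g(n^2,m)$ versus $g(n,m)$ issue from Lemma~\ref{lem:maxflow-reduction} and of the sign alignment is more explicit than the paper's, but the argument is the same.
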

\begin{proof}
It is easily checked that in the proofs, which involve prefix Beck-Fiala,
Lemmas~\ref{lem:maxflow-halfintegral} and~\ref{lem:totalflow-halfintegral},
all the vectors we construct lie in $V_m$ and $V_{m \log P}$.
Moreover, in all our uses of the prefix Beck-Fiala bound,
namely~(\ref{eq:prefix-maxflow}) and~(\ref{eq:prefix-totalflow}), we only require upper
bounds on each interval and dimension.
Hence, the same proofs can be used to show
that the integrality gap of max flow time is
at most $O(g(n, m))$ and that of total flow time is at most
$O(g(n, m \log P) \cdot \log P)$.
\end{proof}

For max flow time the other direction also holds. 
\begin{theorem}
  Let $f(n, m)$ be an upper bound on the integrality gap of the assignment LP for the max flow time problem. Then $g(n, m) \le f(n(m+1), m)$.
\end{theorem}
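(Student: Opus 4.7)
The plan is to encode any $2$-sparse prefix discrepancy instance with vectors in $V_m$ as a max flow time instance whose LP has value $T^* \le 1$, so that the integrality gap assumption directly produces signs with the desired one-sided interval bound. Given vectors $v^{(1)}, \ldots, v^{(n)} \in V_m$ with $v^{(t)}$ having entries $p_1^{(t)}$ at coordinate $i_1^{(t)}$ and $-p_2^{(t)}$ at $i_2^{(t)}$, I would build an instance on $m$ machines with integer release times $1, 2, \ldots, n$. At each time $t$, release one ``main'' job $j_t$ with $p_{i_1^{(t)}, j_t} = 2 p_1^{(t)}$, $p_{i_2^{(t)}, j_t} = 2 p_2^{(t)}$, and $\infty$ elsewhere, along with $m$ ``filler'' jobs $d_{1,t}, \ldots, d_{m,t}$, where $d_{i,t}$ has processing time $1 - |v^{(t)}_i|$ on machine $i$ alone (and $\infty$ elsewhere). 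This yields exactly $n(m+1)$ jobs and $p_{\max} \le 1$.

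Next I would verify that $T^* \le 1$. The natural fractional assignment $x_{i_1^{(t)}, j_t} = x_{i_2^{(t)}, j_t} = 1/2$ and $x_{i, d_{i,t}} = 1$ places LP load exactly $|v^{(t)}_i| + (1 - |v^{(t)}_i|) = 1$ on each machine $i$ per unit of time. Hence for any $t_1 \le t_2$ the total LP load on $i$ from jobs released in $[t_1, t_2]$ is at most $\lfloor t_2 \rfloor - \lceil t_1 \rceil + 1 \le (t_2 - t_1) + 1$, and $p_{ij} \le 1$ for every used $(i,j)$, so $T = 1$ is feasible.

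The integrality gap assumption then produces an integer schedule with $F_{\max} \le f(n(m+1), m) \cdot T^* \le f(n(m+1), m)$. Since processing times are infinite elsewhere, every filler $d_{i,t}$ runs on machine $i$, and each main job $j_t$ lands on one of $\{i_1^{(t)}, i_2^{(t)}\}$; I would define $\epsilon_t = +1$ if $j_t$ is assigned to $i_1^{(t)}$ and $\epsilon_t = -1$ otherwise. A short case analysis on the three possibilities $i = i_1^{(t)}$, $i = i_2^{(t)}$, or $i \notin \{i_1^{(t)}, i_2^{(t)}\}$ (for each sign of $\epsilon_t$) shows that the per-job excess of the integer load over the LP load on machine $i$ equals exactly $\epsilon_t v^{(t)}_i$.

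Summing over $t \in [a, b]$, the integer load on $i$ from jobs released in $[a, b]$ equals $(b - a + 1) + \sum_{t=a}^{b} \epsilon_t v^{(t)}_i$, while the max flow time constraint forces this to be at most $(b - a) + F_{\max}$. Rearranging yields the one-sided interval bound $\sum_{t=a}^{b} \epsilon_t v^{(t)}_i \le F_{\max} - 1 \le f(n(m+1), m)$ for every $1 \le a \le b \le n$ and every $i$, giving $g(n, m) \le f(n(m+1), m)$. The only delicate point is choosing filler sizes to be $1 - |v^{(t)}_i|$ rather than a uniform constant, so that the LP load is flush at $1$ per time unit; this alignment is what makes the multiplicative integrality gap translate into the additive interval discrepancy bound. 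Beyond this, the argument is bookkeeping.
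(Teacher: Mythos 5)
Your construction and argument are essentially identical to the paper's: the same $m+1$ jobs per release time (one two-choice ``main'' job scaled by a factor of $2$, plus one single-machine filler per machine sized $1-|v^{(t)}_i|$ to bring the fractional per-slot load to exactly $1$), the same half-integral LP solution witnessing $T^*\le 1$, and the same extraction of $\epsilon_t$ from the integral assignment of $j_t$ followed by the interval constraint. You even handle the $(\ell-k)$ vs.\ $(\ell-k+1)$ bookkeeping slightly more carefully than the paper does.
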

\begin{proof}
Given a sequence of vectors $v^{(1)},v^{(2)},\dotsc,v^{(n)} \in V_m$
we construct an instance of max flow time as follows.
There are $m$ machines and
for each time $t = 1,2,\dotsc,n$ there are $m+1$ jobs released at this time.
These are derived from $v^{(t)}$.
Let $i_1, i_2, p_1, p_2$ be the defining values of $v^{(t)}$.
We introduce one special job $j_t$, which has $p_{i_1 j_t} = 2 p_1$,
$p_{i_2 j_t} = 2 p_2$, and $p_{ij_t} = \infty$ for all $i\notin \{i_1,i_2\}$.
Then for each machine $i$ we add one job $j$:
If $i = i_1$ we set $p_{ij} = 1 - p_1$ and $p_{i'j} = \infty$ for all $i' \neq i$.
Likewise,
if $i = i_2$ we set $p_{ij} = 1 - p_2$ and $p_{i'j} = \infty$ for all $i' \neq i$.
Finally, for all $i\notin \{i_1, i_2\}$ we set $p_{ij} = 1$ and $p_{i'j} = \infty$
for all $i'\neq i$.
Notice that by assigning a processing time of $\infty$ to all but one machine,
this fixes the assignment of the job.
With this construction, the optimum of the assignment LP is $1$:
It cannot be smaller than $1$, since some jobs have processing time at least
$1$ on all machines.
On the other hand, assigning one half of each of the special jobs $j_t$
to each of the two machines it has finite processing times on
produces a schedule, where each machine is assigned a volume of exactly $1$
from jobs released at each time $t = 1,2,\dotsc,n$.
Hence, the interval constraints of the assignment LP are satisfied
with $T = 1$.
Now let $\OPT$ denote the value of the optimal integral solution $x$.
We now construct signs as follows.
Consider some $v^{(t)}$, $t \in \{1,2,\dotsc,n\}$ and let
$i_1, i_2, p_1, p_2$ be its defining values. In the integral solution
job $j_t$ must be assigned to either $i_1$ or $i_2$. If $x_{i_1 i_t} = 1$,
we set $\epsilon_t = 1$. Otherwise, $\epsilon_t = -1$.
It follows that
\begin{equation*}
    \sum_{j\in J : r_j = t} x_{ij} p_{ij} - 1 = \epsilon_t v_i^{(t)} \ .
\end{equation*}
We conclude that for all $k < \ell$ and all $i$ it holds that
\begin{align*}
    \epsilon_k v_i^{(k)} + \epsilon_{k+1} v_i^{(k+1)} + \cdots + \epsilon_\ell v_i^{(\ell)}
    & = \sum_{j : r_j \in [k, \ell]} x_{ij} p_{ij} - (\ell - k) \\
    & \le (\ell - k) + \OPT - (\ell - k) = \OPT \\
    & \le f(n (m+1), m) \ .
\end{align*}
Thus, $g(n, m) \le f(n (m+1), m)$.
\end{proof}
The two theorems above imply Theorem~\ref{thm:relation}:
Let $f(n, m)$ be an upper bound on the integrality gap of the assignment
LP for the max flow time problem. Then $O(f(n(m+1), m\log P) \cdot \log P)$ upper bounds the integrality gap of the time indexed LP for the total flow time problem.
As our bounds satisfy $f(n, m) \le O(\sqrt{\log n})$, the polynomial increase in the parameters is negligible. In particular, a constant bound for max flow time would imply a tight (up to constants) bound of $O(\log P)$ for total flow time.

\section{Discrepancy game}
\label{sec:game}
In this section we consider a two player game, which is closely related to the prefix discrepancy of 2-sparse vectors and other related discrepancy questions.
Recall that flow time can be reduced to prefix discrepancy
of vectors that have only two non-zero entries (2-sparse),
see Section~\ref{sec:relation}.
There is an approach due to Spencer~\cite{spencer1994ten}
that works very well in the case of 2-sparse
vectors with only entries $\{-1, 0, 1\}$.
Hence, it is tempting to try and generalize
this approach to fractional vectors to make progress on flow time.
In this section we make a case that this is not likely to succeed.
We show that Spencer's approach naturally extends to a certain
two player game, which can be used to prove discrepancy bounds.
However, we show that although there is a good strategy for
the game in the $\{-1, 0, 1\}$ setting, with fractional
values there is none.

\begin{definition}[One dimensional discrepancy game]{\rm
Let $v^{(1)},v^{(2)},\dotsc,v^{(n)} \in [-1, 1]$.
Initially set $\epsilon_i = 0$ for all $i$.
There are two players, the \emph{maker} and the \emph{breaker}.
One of the players is determined to start and chooses an uncolored element $\epsilon_i$ and colors it to
 $-1$ or $1$. Then the other player does the same.
This continues until all elements are colored.
The goal of the breaker is to construct a prefix $1,2,\dotsc,\ell$ such that $|v^{(1)} + v^{(2)} + \cdots + v^{(\ell)}|$
is large. At the same time the maker tries to keep all prefixes balanced.
}\end{definition}
We call a strategy \emph{robust}, if it also works against a player that is allowed
to wait, that is, to not color any element on their turn.
In the following we show that a robust strategy for the maker implies
bounds for the prefix discrepancy
of 2-sparse vectors and the weighted variant of two permutation discrepancy.
Further, in a similar way as a proof by Spencer for two permutation discrepancy~\cite{spencer1994ten},
one can
devise a robust, constant discrepancy strategy for the maker if all values are equal to $1$.
This implies constant bounds for the two aforementioned discrepancy questions, if all coefficients
are in $\{-1, 0, 1\}$. It therefore appears natural to try to generalize this game approach.
However, we will show that in the case of general weights, the game behaves differently
and that there exist instances, where the breaker can achieve an unbounded discrepancy.
This gives evidence that the weighted variants of
these discrepancy questions are much harder than the unweighted ones.

\subsection{A strategy when all values are one}

We show that the maker always has a good robust strategy when values are integral.
\begin{lemma}
In the one dimensional discrepancy game with values $\{-1,1\}$
there is a robust strategy for the maker to ensure a discrepancy of at most $4$ on all prefixes.
\end{lemma}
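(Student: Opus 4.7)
The plan is to exhibit an explicit pairing strategy for the maker. Fix the partition of $\{1,\ldots,n\}$ into consecutive two-element blocks $P_k=\{2k-1,2k\}$ (pad with a dummy if $n$ is odd), and call a block \emph{half-open} if exactly one of its two positions has been colored so far. On each of its turns the maker applies the first applicable rule:
\begin{itemize}
 \item[(a)] If some block $P_k$ is half-open, color the remaining position with the sign that makes the block's two contributions $\epsilon_{2k-1}v^{(2k-1)}+\epsilon_{2k}v^{(2k)}$ cancel (possible since both values lie in $\{-1,1\}$).
 \item[(b)] Otherwise, color an arbitrary position of some all-uncolored block, choosing the sign opposite to $\mathrm{sign}(S_n)$, where $S_n=\sum_i\epsilon_iv^{(i)}$ is the current running prefix sum at position $n$ (treating uncolored positions as contributing $0$).
\end{itemize}

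I would then prove by induction on rounds the invariant: after every maker move, at most one block is half-open and $|S_\ell|\le 3$ for every prefix length $\ell$. Since a single breaker action (a coloring or a wait) changes any $S_\ell$ by at most one and creates at most one additional half-open block, we obtain $|S_\ell|\le 4$ after every breaker move, which is the claimed bound.

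For the induction, rule (a) cancels whatever $\pm 1$ was pushed into prefix sums at positions $\ell\ge 2k$ when the first position of $P_k$ was colored, leaving behind at most a single $\pm 1$ at the interior prefix $\ell=2k-1$. Rule (b) is the subtle case: if the breaker later \emph{spoils} the maker-initiated block by closing it with a matching $\pm 1$, the block's final contribution is $\pm 2$ at every $\ell\ge 2k$, but the sign-choice in (b) ensures this $\pm 2$ drives $S_n$ toward zero rather than further away. Iterating, $S_n$ oscillates in $[-2,2]$ under any sequence of passes and spoils, and combining this with the at-most-one surviving interior $\pm 1$ from a cut block together with the at-most-one contribution of any remaining half-open block yields $|S_\ell|\le 3$ after the maker's move.

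The main obstacle is interleaving: when the breaker plays actively while a maker-initiated block is still half-open, the maker must close each freshly-opened breaker block via rule (a) before returning to its own, so that maker block may stay unresolved for many rounds and the breaker can spoil it from a much later state. The sign-alternation in rule (b) at each new initiation is precisely what prevents successive forced initiations from compounding an imbalance, since every new sign is re-chosen based on the then-current value of $S_n$. Robustness is automatic because rule (b) is invoked exactly when the breaker waits, so passing cannot surprise the strategy.
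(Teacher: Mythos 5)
Your pairing strategy and the cancel-vs-greedy split are essentially identical to the paper's argument, but there is a genuine gap in the analysis: controlling $S_n$ does not by itself control the individual prefix sums $S_\ell$, and the leap you make from one to the other fails under the rule you actually wrote down. Your rule (b) says to color ``an arbitrary position of some all-uncolored block'' with sign opposite to $\mathrm{sign}(S_n)$. If the maker is allowed to initiate blocks out of positional order, the breaker can accumulate all the spoiled $-2$ contributions in an early interval and all the compensating $+2$'s in a late one, so that $S_n$ stays in $[-2,2]$ while some intermediate prefix blows up. Concretely, let the maker initiate (and the breaker spoil) $P_{10}, P_{50}, P_{20}, P_{60}, P_{30}, P_{70}$ in this order; the greedy based on $S_n$ produces block contributions $-2,+2,-2,+2,-2,+2$ in initiation order, but in positional order they are $-2,-2,-2,+2,+2,+2$ and $|S_{60}|=6$. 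So the invariant ``$|S_\ell|\le 3$ after every maker move'' is false as stated. What the paper does, and what you must do, is initiate the \emph{first} uncolored block; this guarantees that at the moment of every initiation all earlier blocks are already resolved, so the greedy sign is opposite to the actual running prefix at the insertion point, and the maker-initiated blocks appear in positional order, which is what bounds every prefix of the $I$-subsequence (not merely the total $S_n$).

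A second, related underspecification: when both a maker-initiated block and a breaker-opened block are half-open, your rule (a) does not say which to close. Closing the maker's own block first can leave a breaker-opened $\pm 1$ dangling against a running $I$-sum of $-2$, which pushes an odd prefix to $\pm 4$ already after the maker's move; the paper avoids this by always closing the most recently opened half-open pair. Finally, ``pad with a dummy if $n$ is odd'' does not work with your rule (a) as written (a dummy of value $0$ cannot be cancelled against a $\pm 1$); the paper instead ignores the last element, treats the breaker coloring it as a wait, and absorbs the cost as a $+1$ to the bound. None of these are deep obstacles, but as written the strategy is underdetermined and the invariant you would need fails for some of the choices your rules permit.
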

\begin{proof}
In the case that all values equal to $1$, we first assume the breaker starts
and $n$ is even to elaborate the main idea.
The maker has a simple strategy to ensure that all prefixes
are bounded by $1$:
We form pairs $(v^{(2i-1)}, v^{(2i)})$ for $i=1,2,\dotsc,n/2$ and ensure that
$\epsilon_{2i-1} + \epsilon_{2i} = 0$. More precisely, when the breaker colors one element of a pair,
the maker colors the other element in the opposite way.

We now extend this strategy to the case, where the maker starts or the breaker is allowed to wait.
First, note that we can still assume that $n$ is even by ignoring the last element and if the breaker
colors it, we treat this as a waiting move.
Ignoring the last element will add at most $1$
to the discrepancy bound we obtain.
We form the same pairs as before. When it is the maker's turn and there is some pair of which
exactly one element is colored, again we color the other element in the opposite way
(if there are multiple such pairs, we take the last one).
Otherwise, we take the first uncolored element and color it greedily.
More formally, if $i$ is the first uncolored element
and $\epsilon_1 v^{(1)} + \cdots + \epsilon_{i-1} v^{(i-1)}$ is negative, we set $\epsilon_i = 1$.
If it is non-negative, we set $\epsilon_i = -1$.

In order to bound the discrepancy in this strategy, we consider separately those pairs for which
the first element was colored due to the greedy strategy and all other pairs.
Let $(i_1, i_1 + 1), (i_2, i_2 + 1), \dotsc, (i_\ell, i_\ell + 1)$ denote pairs of the former kind
in correct order and let
$(j_1, j_1 + 1), (j_2, j_2 + 1), \dotsc, (j_k, j_k + 1)$ be the other indices.
It is not hard to see that $i_1, i_1 + 1, i_2, i_2 + 1,\dotsc$ were colored in exactly this order
and $i_1,i_2,\dotsc$ were chosen in the greedy way. It follows inductively that the prefixes of 
$v^{(i_1)},v^{(i_1+1)},v^{(i_2)},v^{(i_2+1)},\dotsc$ cannot exceed $2$ in absolute value:
Suppose first that the prefix up to $v^{(i_t + 1)}$ is $0$, $1$, or $2$.
Then $\epsilon_{i_{t+1}}$ was chosen as $-1$ and the prefix up to $v^{(i_{t+1} + 1)}$ cannot be bigger,
so it is in $\{-2,-1,0,1,2\}$. If the prefix is $-1$ or $-2$, this also follows in the same way.
Moreover, the prefixes of $v^{(j_1)},v^{(j_1+1)},v^{(j_2)},v^{(j_2+1)},\dotsc$ cannot exceed $1$
in absolute value as argued in the simpler case.
Adding the bounds for both subsequences and the loss due to making $n$ even, we conclude
that the strategy maintains a discrepancy bound of $4$ for all prefixes.

This approach easily extends to the case where values are $-1$ or $1$ by simply inverting the coloring of all $-1$ elements.
\end{proof}

\subsection{Reduction to the game}
In this section we show that prefix discrepancy and two permutation discrepancy
can be reduced to the game.
\begin{theorem}
Assuming the maker has a robust strategy to maintain a bound of $C$ on the discrepancy of all prefixes,
the prefix discrepancy of 2-sparse vectors is at most $2C$.
\end{theorem}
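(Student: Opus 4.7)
My plan is to reduce the 2-sparse prefix Beck-Fiala problem to playing one instance of the one-dimensional game per coordinate, all in parallel. For each $i \in [m]$, I would consider the game $G_i$ whose input sequence is $(v^{(j)}_i)$ restricted to those $j$ with $v^{(j)}_i \neq 0$, taken in the original order of $j$. After absorbing the fixed sign of $v^{(j)}_i$, the sign played in $G_i$ on vector $j$ coincides with $\epsilon_j$. Since every $v^{(j)}$ has exactly two non-zero coordinates $i_1,i_2$, committing to $\epsilon_j$ amounts to committing simultaneously to one move in $G_{i_1}$ and one move in $G_{i_2}$.

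I would process the vectors in the order $j=1,2,\dots,n$ and, for each $v^{(j)}$, designate one of its two coordinates as \emph{primary}: the robust maker in $G_{\text{primary}}$ selects $\epsilon_j$ according to its strategy, while the other game receives this $\epsilon_j$ as a forced breaker move. Because the strategy is robust, its bound of $C$ is maintained in each $G_i$ regardless of how and when the ``breaker'' moves arrive --- including scenarios where the breaker effectively waits between consecutive maker turns. A parity-based assignment (primary is, say, the coordinate whose current position in its own sub-sequence is odd, with some tie-breaking) is designed to make each $G_i$ see a valid alternating sequence of maker and (possibly waiting) breaker moves.

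A bound of $C$ on every prefix of every $G_i$ translates directly into a bound of $C$ on the $i$-th coordinate of every prefix sum of the 2-sparse input, because the prefix up to vector $k$ in coordinate $i$ equals a prefix of $G_i$ up to the largest $j' \le k$ with $v^{(j')}_i \neq 0$. The factor-$2$ loss in $2C$ comes from the fact that a purely local primary rule cannot always produce perfect alternation: there can be vectors where both coordinates are simultaneously in the maker (or the breaker) phase. In such a mismatch, one game absorbs the conflict as an extra ``off-schedule'' breaker move, and robustness bounds the resulting extra slack in that game's prefix by another $C$.

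The main obstacle I foresee is formalising the primary-assignment rule and carefully accounting for the slack. I expect an inductive argument going vector by vector: maintain the invariant that after processing any prefix each game is within one move of an ideal alternating play, then use the robustness guarantee to bound the extra contribution of any mismatched move. If necessary, I would strengthen the scheme by processing vectors in pairs --- each pair involving at most four game moves across at most four coordinates --- so that any local parity conflict is resolved within the pair, keeping the additive overhead over the $C$ guaranteed by the game at most $C$, and hence the total at most $2C$.
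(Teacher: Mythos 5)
Your plan shares the ``one game per coordinate'' picture with the paper, but it is missing the decomposition that actually makes the factor $2$ fall out, and the step you use to account for it does not hold. The paper first writes each $2$-sparse $v^{(j)}$ as $u^{(j)} + w^{(j)}$, where $u^{(j)}$ carries the first nonzero entry and $w^{(j)}$ the second, and applies the triangle inequality $\lVert\sum_j \epsilon_j v^{(j)}\rVert_\infty \le \lVert\sum_j \epsilon_j u^{(j)}\rVert_\infty + \lVert\sum_j \epsilon_j w^{(j)}\rVert_\infty$. It then produces one sign sequence that is $C$-good simultaneously for the $u$'s and for the $w$'s by introducing two players who alternate moves: player~$1$'s game on coordinate $i$ contains only the vectors $j$ whose \emph{first} nonzero coordinate is $i$, and player~$2$'s game on coordinate $i$ only those whose \emph{second} nonzero coordinate is $i$. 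So what you call $G_i$ is in fact two disjoint games in the paper, each seeing an alternating (up to waiting) maker/breaker schedule, each ending up within $C$; the $2C$ bound is then immediate from the triangle inequality, not from any slack argument inside a single game.

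The step that fails in your write-up is the claim that an ``off-schedule'' breaker move is absorbed at an extra cost of $C$. Robustness, as defined, only covers the opponent \emph{waiting}, i.e.\ making fewer moves than it is entitled to; it says nothing about the breaker getting two or more consecutive moves. Once a game is off by one move it can stay off by one, the mismatch can recur across many vectors, and the resulting one-sided drift is not bounded by an extra additive $C$. Moreover there is no global primary-coordinate rule that keeps every $G_i$ perfectly alternating: for vectors supported on coordinate pairs $\{1,2\},\{2,3\},\{1,2\},\{1,3\},\{2,3\}$ (in that order), no orientation of the five ``edges'' alternates simultaneously at all three vertices, and your ``process in pairs'' fallback does not address the propagation of such conflicts. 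Adopting the $u$/$w$ split makes the alternation automatic and removes the need for any such accounting.
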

\begin{proof}
Let $v^{(1)}, v^{(2)},\dotsc,v^{(n)}\in\mathbb R^m$ be 2-sparse vectors in an instance of prefix discrepancy.
For all $i$ we define vectors $u^{(i)}, w^{(i)}$, where $u^{(i)}$ equals $v^{(i)}$ on the first
non-zero entry of $v^{(i)}$ and is zero everywhere else.
Likewise, $w^{(i)}$ contains only the second non-zero entry of $v^{(i)}$.
Let $\epsilon_1,\epsilon_2,\dotsc,\epsilon_n$ be signs. Then for any $k = 1,2,\dotsc,n$ it holds that
\begin{align*}
    \lVert \epsilon_1 v^{(1)} + \cdots \epsilon_k v^{(k)} \rVert_\infty
    &= \lVert \epsilon_1 (u^{(1)} + w^{(1)}) + \cdots \epsilon_k (u^{(k)} + w^{(k)}) \rVert_\infty \\
    &\le \lVert \epsilon_1 u^{(1)} + \cdots \epsilon_k u^{(k)} \rVert_\infty
    + \lVert \epsilon_1 w^{(1)} + \cdots \epsilon_k w^{(k)} \rVert_\infty \ .
\end{align*}
In other words, it suffices to find a coloring which is good simultaneously for $u^{(i)}$ and $w^{(i)}$.
Now suppose that player~$1$ wants to ensure that the prefix sums with respect to $u^{(i)}$
are good and player~$2$ does the same for $w^{(i)}$. We start with an empty coloring and then in each
round each player gets to color one element.
We can reduce this to the robust one dimensional discrepancy game:
Because of the symmetry, we will focus only on a strategy for player~$1$.
We run a given one dimensional strategy in parallel for each of the $m$ dimensions.
The input for each dimension are the vectors $u^{(i)}$ whose non-zero entry is in this dimension.
When player~$2$ fixes the sign of some $u^{(i)}$, we consult the strategy for the dimension
in which its non-zero entry lies and color another element in the same dimension.
This will ensure that the prefixes in every dimension remain bounded.
More precisely, the prefix discrepancy of $v^{(1)}, v^{(2)},\dotsc,v^{(n)}$ is at most $2 C$.
\end{proof}

In a similar way, we can reduce two permutation discrepancy to the same game.
\begin{definition}[Weighted two permutation discrepancy]{\rm
Let $v^{(1)},v^{(2)},\dotsc,v^{(n)} \in [-1, 1]$ and let $\sigma$ be a permutation on the $n$ elements.
The weighted two permutation discrepancy is concerned with the best bound $C_n$ such that
there exist signs $\epsilon_1,\epsilon_2,\dotsc,\epsilon_n$ with
\begin{align*}
    |\epsilon_1 v^{(1)} + \epsilon_2 v^{(2)} + \cdots + \epsilon_k v^{(k)} | &\le C_n \text{ and} \\
    |\epsilon_{\sigma(1)} v^{(\sigma(1))} + \epsilon_{\sigma(2)} v^{(\sigma(2))} + \cdots + \epsilon_{\sigma(k)} v^{(\sigma(k))} | &\le C_n
\end{align*}
for all $k = 1,2,\dotsc,n$.
}\end{definition}
\begin{theorem}
Assuming the maker has a robust strategy to maintain a bound of $C$ on the discrepancy of all prefixes,
the weighted two permutation discrepancy is at most $C$.
\end{theorem}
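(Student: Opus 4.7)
The plan is to reduce the problem to two parallel one-dimensional discrepancy games that share the same underlying sign variables. Define Game $A$ to be the one-dimensional discrepancy game with input sequence $v^{(1)}, v^{(2)}, \ldots, v^{(n)}$ in the natural order, and Game $B$ to be the one-dimensional game with input sequence $v^{(\sigma(1))}, v^{(\sigma(2))}, \ldots, v^{(\sigma(n))}$. Both games operate on a single common set of variables $\epsilon_1, \ldots, \epsilon_n$.

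I would run a combined process in which two parties, $A$ and $B$, strictly alternate turns; on its turn a party picks some uncolored $\epsilon_i$ and assigns it a value in $\{-1, 1\}$. Party $A$ follows the assumed robust maker strategy for Game $A$, interpreting every move made by party $B$ as a breaker move in Game $A$. Party $B$ does the symmetric thing for Game $B$. Since each party always has a legal maker move so long as some element is uncolored, the process terminates with a complete coloring of all $\epsilon_i$.

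By the robust guarantee of the strategy in Game $A$, at every point in the process every prefix sum $|\epsilon_1 v^{(1)} + \cdots + \epsilon_k v^{(k)}|$ is bounded by $C$, regardless of how the breaker moves. Symmetrically, Game $B$'s guarantee yields $|\epsilon_{\sigma(1)} v^{(\sigma(1))} + \cdots + \epsilon_{\sigma(k)} v^{(\sigma(k))}| \le C$ for all $k$. Since these two inequalities are exactly the conditions defining weighted two permutation discrepancy, the resulting common coloring is a witness that the discrepancy is at most $C$.

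The only subtlety I expect is checking that each simulated game is indeed a legitimate instance of the one-dimensional discrepancy game: strict alternation matches the standard turn structure, and from $A$'s viewpoint $B$'s choices (which are driven by $B$'s own objective in Game $B$, not by any attempt to harm Game $A$) are simply some particular breaker strategy, which the maker guarantee is required to handle. Robustness against waiting is convenient rather than essential here, but is useful for gracefully handling edge cases where a party's preferred target has already been claimed by the other. Unlike the $2$-sparse reduction, no decomposition of vectors is performed and no triangle inequality is applied, which is why the bound is $C$ rather than $2C$.
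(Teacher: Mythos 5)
Your proof is correct and takes essentially the same approach as the paper: two players simultaneously run the robust maker strategy on the same pool of signs, each interpreting the other's moves as breaker moves in its own game, so both prefix guarantees hold at once. One small inaccuracy in your side remark: robustness (or at least insensitivity to which player starts) is in fact essential rather than merely convenient, since exactly one of the two parties must go second and its strategy must therefore tolerate an opponent-first turn order; the ``already-claimed target'' scenario you mention does not arise, because a maker strategy only ever selects from the currently uncolored elements.
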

\begin{proof}
In a similar way as before, we will use the one dimensional discrepancy game to derive bounds
on this discrepancy measure.
To this end, player~$1$ needs to ensure that the prefixes with respect to $v^{(i)}$
are bounded and player~$2$ does the same with respect to $v^{(\sigma(i))}$.
Any bound on the discrepancy for the maker's strategy directly implies the same bound on the
weighted two permutation discrepancy.
\end{proof}

In particular, given the strategy for the case of all values equal to $1$ (or $-1$), we
get a bound of $8$ for the prefix discrepancy of $2$-sparse vectors and a bound of $4$
for the two permutation discrepancy when all entries are in $\{-1, 0, 1\}$.
\subsection{Hardness of the game.}

We will now show that the discrepancy game where the values are not necessarily integers is much harder.
\begin{lemma}
There is an
instance of the one dimensional discrepancy game
such that the breaker has a strategy to construct
a prefix with discrepancy $\Omega(\log(n) / \log\log(n))$.
\end{lemma}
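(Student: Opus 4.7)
The plan is to construct an adversarial instance of $n$ values in $[-1, 1]$ together with a breaker strategy forcing a prefix of magnitude $\Omega(\log n / \log\log n)$. I would build the instance recursively. Let $T(k)$ denote the minimum number of elements sufficient for the breaker to force discrepancy $k$. The base case is $T(1) = 1$ (a single element of value~$1$). For the recursive step, I would take the $T(k+1)$-instance to consist of $s = \Theta(\log T(k))$ copies of the $T(k)$-instance arranged in order in the sequence, each rescaled by a factor $1/s$ so that individual values remain in $[-1,1]$, together with a small number of ``anchor'' elements of larger magnitude placed between copies. Solving the resulting recurrence $T(k+1) \approx T(k) \cdot \log T(k)$ yields $T(k) \approx (\log n)^k$, so setting $T(k) = n$ forces $k = \Theta(\log n / \log\log n)$.

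The breaker's strategy is itself recursive. Within each of the $s$ rescaled copies, the breaker applies the $T(k)$-strategy, which by induction forces a discrepancy of~$k/s$ at some prefix internal to that copy. The anchor elements and the chosen order are used to enforce that these per-copy imbalances \emph{align in sign}, so that at the prefix just after the $j$-th copy the total imbalance is at least $j \cdot k/s + \Omega(1)$ from the anchor. After all $s$ copies, the breaker has forced prefix discrepancy $k + \Omega(1) = k+1$.

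The main quantitative step, and the main obstacle, is to show that the $s$ per-copy imbalances cannot all be simultaneously cancelled by the maker. Here one uses the fact that the $n/2$ moves of the maker are spread across $s$ copies; the inductive hypothesis guarantees that within each copy the maker needs a certain number of ``in-copy'' moves to keep the internal discrepancy below $k/s$, and if too many copies are ignored, a constant fraction still exhibit the forced imbalance. This would be formalized by a potential function $\Phi_t = \sum_\ell w_\ell D_\ell$, where $D_\ell$ is the signed committed imbalance at scale $\ell$ (in the recursion) and $w_\ell$ weights the scales appropriately; the breaker's strategy is designed so that each breaker move increases $\Phi_t$ by a predictable amount while each maker move decreases it by no more than half as much, exploiting the alternation structure of the game.

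The hard part will be arguing that robustness (i.e., allowing the breaker to wait) does not help the maker defeat this recursive strategy. To handle this, I would use the waiting option offensively for the breaker: the breaker waits through most of each copy until enough elements have been colored by the maker, then commits the remaining large-enough-fraction of the copy's elements in a sign-consistent way to force the scaled-down discrepancy within that copy. Since there are $\Omega(\log n / \log\log n)$ nested scales, iterating yields the claimed $\Omega(\log n / \log\log n)$ lower bound on prefix discrepancy.
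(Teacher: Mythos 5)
Your high-level template---a recursive tree-like instance with roughly $\Theta(\log n/\log\log n)$ levels, giving discrepancy that matches the depth---is the right one, and your parameter count ($T(k)\approx n$ forces $k\approx\log n/\log\log n$) comes out to the same asymptotics as the paper. However, two aspects of your sketch diverge from what actually makes the paper's argument go through, and I think the second one is a genuine gap rather than a stylistic difference.

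First, your construction rescales each sub-copy by $1/s$ at every level, so elements at depth $d$ have magnitude $\approx s^{-d}$, which is polynomially small in $n$. The paper instead keeps \emph{all} values in $[0.5,1]$: a node at layer $i$ of the $k^2$-ary tree has value $1-i/k$, so values across levels differ only by increments of $1/k$. This is not cosmetic. The paper's invariant (Property~5, $u\ge (j-2)/k$) and its preservation under the ``merge intervals'' move (Case~1) depend crucially on the fact that when the maker colors an element inside an interval, the breaker can sacrifice one marked index $i_t$ whose value is \emph{strictly larger than everything inside the interval} and therefore recover at least $1/k$ of slack. With your exponential rescaling, the anchor's advantage over in-copy elements would not decay gracefully per level, and you would need to re-derive what the breaker gains per concession; the telescoping sum of scaled contributions $\sum_j (1/s)^j$ would be bounded by a constant rather than growing like $k$, which is exactly the wrong behavior.

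Second, and more importantly, the step you yourself flag as ``the main obstacle''---showing the maker cannot simultaneously cancel the per-copy imbalances---is precisely the content of the paper's proof, and your sketch does not supply an argument for it. The counting heuristic (``the maker has only $n/2$ moves spread over $s$ copies'') does not by itself work, because the maker can concentrate its moves adaptively on whichever copy the breaker is currently attacking, and because the breaker cannot simply ``apply the $T(k)$-strategy inside a copy'' while the maker is free to play elsewhere and then come back. The paper resolves this not with a potential function $\Phi_t=\sum_\ell w_\ell D_\ell$ (which you invoke but do not define or analyze), but by maintaining an explicit combinatorial structure $i_0<i_1<\cdots<i_{\ell+1}$ with five invariants and doing a two-case analysis on the maker's move: if the maker colors inside some interval $[i_t+1,i_{t+1}-1]$, the breaker merges that interval leftward, using the fact that $v^{(i_t)}$ strictly exceeds all uncolored values inside to conclude $u$ increases by $\ge 1/k$; if the maker plays elsewhere, the breaker extends the chain by one. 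Termination happens when the chain exhausts the tree (either hitting the bottom or running out of siblings), at which point the interval $[i_0+1,i_{\ell+1}-1]$ has signed sum $\Omega(k)$. Your sketch needs an analogue of this per-move accounting; as written, the claim that ``each maker move decreases $\Phi_t$ by no more than half as much'' is asserted but not argued, and it is not obvious how to make it true for any natural choice of weights $w_\ell$. I would either replace the potential function with an explicit invariant as in the paper, or, if you want to keep the potential-function framing, actually write down $w_\ell$ and prove the claimed monotonicity under both players' moves, including the waiting move.
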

\begin{proof}
Let $k$ be an even number. We recursively construct an instance as follows.
The instance $I_0$ contains no elements. Then for $i \ge 1$ let the elements in $I_i$ be
\begin{equation*}
    0.5 + \frac i k, \underbrace{I_{i-1}, I_{i-1}, \dotsc, I_{i-1}}_{k^2 \text{ times}}
\end{equation*}
We consider now the instance $I = I_{k / 2}$, which has a total of $k^{O(k)}$ elements,
each of which has a value between $0.5$ and $1$ and we will show that the breaker can construct a
prefix with discrepancy $\Omega(k)$.

Another way to arrive at this construction is to consider a complete $k^2$-ary tree of height~$k / 2$.
The value of a node in the $i$th layer is $1 - i/k$. For example,
the root is in layer~$0$ and therefore has value~$1$.
We arrive at the same sequence as before, if perform a preorder walk:
starting at the root, output the current node;
then recurse on the first child (and its subtree), then the second child, etc.

Based on the tree analogy, for an index $i$ we call $\text{next-sib}(i)$ the element
corresponding to the next child of $i$'s parent. In other words,
it is the next element in the sequence that has the same value as $v^{(i)}$, except when there
is a larger element between them, in which case it is undefined.
Further, we define $\text{first-child}(i)$ as the element corresponding to the first child of $i$.
Notice that by the order the tree is traversed, we have $\text{first-child}(i) := i+1$.
The $\ell$th child of $i$ is
$\text{next-sib}(\text{next-sib}( \cdots \text{next-sib}(\text{first-child}(i))\cdots))$, where $\text{next-sib}$
is nested $\ell-1$ times.

It suffices for the breaker to produce a large discrepancy on some prefix at some point of time in the game, even
if not all elements have been colored, yet. Then, for the rest of the game the breaker can maintain this large discrepancy by
always coloring the largest remaining element in this prefix so as to increase the discrepancy. From there on the maker can
reduce it by at most $1$ (if it is the maker's turn).
Our approach is to produce an interval of elements, whose
value is very large, that is,
$\epsilon_{k}v^{(k)} + \epsilon_{k+1}v^{(k+1)} + \cdots + \epsilon_{\ell}v^{(\ell)} \ge \Omega(k)$.
Then either the prefix up to $k-1$ has a large discrepancy or the one
up to $\ell$ does.

To this end, the breaker constructs a structure of the following form. There are indices $i_0, i_1, i_2\dotsc,i_{\ell + 1}$
with the following properties.
\begin{enumerate}
    \item Elements $i_1, i_2, \dotsc, i_{\ell}$ are all colored with $1$.
    Note that $i_0$ and $i_{\ell+1}$ are not necessarily colored with~$1$.
    \item For all $j = 0,1,\dotsc,\ell$ the uncolored elements in $v^{(i_j+1)},v^{(i_j+2)},\dotsc,v^{(i_{j+1}-1)}$ are strictly smaller than $v^{(i_j)}$.
    \item For all $j = 0,1,\dotsc,\ell$ we have that
    $u_j := \epsilon_{i_j+1}v^{(i_j+1)} + \cdots + \epsilon_{i_{j+1}-1}v^{(i_{j+1}-1)} \ge 0$.
    \item The elements in the subtree of $i_\ell$ (except for $i_\ell$ itself) are all uncolored.
    The same holds for the subtree of $\text{next-sib}(i_\ell)$, $\text{next-sib}(\text{next-sib}(i_\ell))$, etc., this time including the elements themselves.
    \item We have $v^{(i_{\ell+1})} \ge 1 - \ell / k$ (that is, $i_{\ell+1}$ is in layer at most $\ell$) and if $i_{\ell + 1}$ is a $j$th child, then for $u := u_0 + u_1 + \cdots + u_\ell$ we have $u \ge (j - 2) / k$.
\end{enumerate}
As will be shown, the breaker can maintain this structure as long as
$\text{next-sib}(i_{\ell + 1})$ and $\text{first-child}(i_\ell)$ are defined.
Notice that once either of them is not (which has to happen eventually)
the interval $[i_0 + 1, i_{\ell + 1} - 1]$ has a value of $\Omega(k)$:
If $\text{next-sib}(i_{\ell + 1})$ is not defined, then $i_{\ell+1}$ must be a $k^2$th child
and by the last property $u \ge \Omega(k)$. Thus
\begin{align*}
    \epsilon_{i_0 + 1} v^{(i_0 + 1)} + \epsilon_{i_0 + 2} v^{(i_0 + 2)} + \cdots + \epsilon_{ i_{\ell + 1} - 1} v^{( i_{\ell + 1} - 1)}
    &= u + \epsilon_{i_1} v^{(i_1)} + \epsilon_{i_2} v^{(i_2)} + \cdots + \epsilon_{i_\ell} v^{(i_\ell)} \\
    &\ge u \ge \Omega(k) \ .
\end{align*}
If on the other hand  $\text{first-child}(i_{\ell})$ is not defined, then $i_{\ell}$ is in
layer $k/2$ and so is $i_{\ell+1}$. Thus, by the last property $\ell \ge k / 2$
By the third property we have $u \ge 0$ and therefore using the first property we conclude
\begin{align*}
    \epsilon_{i_0 + 1} v^{(i_0 + 1)} + \epsilon_{i_0 + 2} v^{(i_0 + 2)} + \cdots + \epsilon_{ i_{\ell + 1} - 1} v^{( i_{\ell + 1} - 1)}
    &= u + \epsilon_{i_1} v^{(i_1)} + \epsilon_{i_2} v^{(i_2)} + \cdots + \epsilon_{i_\ell} v^{(i_\ell)} \\
    &= u + \frac 1 2 \ell \ge \Omega(k) \ .
\end{align*}
The breaker starts with $\ell = 1$, $i_0 = 1$ (the root), $i_1 = \text{first-child}(i_0)$ and $i_2 = \text{next-sib}(i_1)$ and colors $i_1$ with~$1$.

\paragraph*{Case 1: The maker colors an element inside an interval $[i_{t} + 1, i_{t+1} - 1]$ for some $t \ge 1$.}
We construct a new sequence $i'_0,i'_1,\dotsc,i'_{\ell+1}$
by merging the interval $[i_{t} + 1, i_{t+1} - 1]$ with that to its left and
append $\text{next-sib}(i_{\ell+1})$ to the structure.
Formally,
set $i'_j = i_j$ for $j = 0,1,\dotsc,t-1$;
set $i'_j = i_{j+1}$ for $j = t,t+1,\dotsc,\ell$;
$i'_{\ell+1} = \text{next-sib}(i_{\ell+1})$
and color $i'_\ell$ with $1$.
The element $i_t$ is removed from the special indices and is used to compensate for the element that
the maker has colored. Notice that by Property~2, $v^{(i_t)}$ is strictly larger and therefore $u$
increases by at least $1/k$. This is necessary because by moving $i'_{\ell+1}$ to the next sibling
the requirement on $u$ by Property~5 increases.

\paragraph*{Case 2: The maker colors no element inside the interval $[i_1 + 1, i_{\ell+1} - 1]$.}
We again construct a new sequence $i'_0,i'_1,\dotsc,i'_{\ell'+1}$,
this time increasing $\ell'$ to $\ell+1$.
We set $i'_0 = i_1 - 1$; $i'_j = i_j$ for $j = 1,2,\dotsc,\ell$;
$i'_{\ell+1} := \text{first-child}(i_\ell)$ and color it with $1$;
finally set $i'_{\ell + 2} = \text{next-sib}(i'_{\ell+1})$ and $\ell' = \ell + 1$.
Intuitively, in this case the maker does not interfere with our structure. Therefore,
we can extend it by increasing $\ell$. However, we need to be a bit careful.
Since it might be that the maker has colored an element in $[i_0+1, i_1 - 1]$, we increase
$i_0$ to $i_1 - 1$. This is to ensure that $u_0$ does not become negative.
Because the maker might have colored $i_{\ell+1}$, we discard this element from our structure
and replace it by $\text{first-child}(i_\ell)$. Finally, we add a new last element
as the next sibling. Notice that since $i'_{\ell'+1}$ is a second child, the bound on $u$ from
Property~$5$ is implied by Property~$3$, which is important since moving $i_0$ may have decreased $u$.

\end{proof}

\section{SDP relaxation for prefix Koml\'os}
\label{sec:SDP}

Let $v^{(1)},\ldots,v^{(n)} \in \R^m$ be arbitrary vectors satisfying $\|v^{(j)}\|_2\leq 1$ for all $j = 1,2,\dotsc,n$.
We consider the natural SDP relaxation for the
prefix discrepancy problem and show that the SDP discrepancy for every prefix is at most $1$.

The SDP discrepancy is a relaxation of discrepancy where
instead of signs $\varepsilon_j \in \{-1,1\}$ we allow vectors $w_j \in \R^d$ in some arbitrary dimension $d$ (as we have $n$ vectors $d = n$ always suffices) satisfying $\|w_j\|_2=1$. We say that the SDP prefix discrepancy is at most $C$ if there exist $w_1,\ldots,w_n$ such that
for every row $i = 1,2,\dotsc,m$ and for every $k=1,2, \dotsc,n$ it holds that
\begin{equation*}
    \Big\|\sum_{j=1}^k v^{(j)}_i w_j \Big\|_2 \leq  C \ .
\end{equation*}

For bounding the SDP prefix discrepancy
the idea is to use Banaszczyk's theorem for prefix discrepancy on a modified collection of vectors and a different convex body.
This idea was first pointed out to us by Raghu Meka in the context of SDP discrepancy for the Koml\'os problem (without prefixes). We adapt this idea to the prefix setting.

Let us first state Banaszczyk's theorem in the general form.
Recall, the Gaussian measure $\gamma_m(K)$ of a convex body $K\subseteq \mathbb R^m$
is the probability that a point $g\in\mathbb R^m$,
where each component is selected independently at random from the Gaussian
distribution $N(0, 1)$ with mean $0$~and variance~$1$,
is in $K$. We also say that $K$ is $0$-symmetric if $v\in K$ implies that $-v \in K$.
\begin{theorem}[\cite{DBLP:journals/rsa/Banaszczyk12}]
  Let $K\subseteq \mathbb R^m$ be $0$-symmetric and convex
  with $\gamma_m(K) \ge 1 - 1/2n$.
  Then for any $v^{(1)},v^{(2)},\dotsc,v^{(n)}$ there are signs
  $\epsilon_i\in \{-1, 1\}$ with
  \begin{equation*}
      \epsilon_1 v^{(1)} + \epsilon_2 v^{(2)} + \cdots + \epsilon_k v^{(k)} \in K
  \end{equation*}
  for all $k=1,2\dotsc,n$.
\end{theorem}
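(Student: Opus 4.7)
The plan is to follow Banaszczyk's original inductive strategy: choose the signs one by one to keep a Gaussian-measure-based potential under control, then convert a final potential bound into the desired containment $x_k \in K$ by a separation argument. Let $x_k := \epsilon_1 v^{(1)} + \cdots + \epsilon_k v^{(k)}$ denote the running partial sum and define the potential
\[
\Phi_k \ :=\ -\log\gamma_m(K-x_k).
\]
By hypothesis $\Phi_0 \le -\log(1-\tfrac{1}{2n}) \le \tfrac{1}{n}$. Since $x\mapsto \gamma_m(K-x)$ is log-concave (Prékopa–Leindler applied to the convolution of $\mathbf 1_K$ with the Gaussian density), $\Phi$ is a convex function on $\mathbb R^m$, so it is well-controlled globally. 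The invariant I will maintain is $\Phi_k < \log 2$ for all $k\le n$ (tacitly assuming $\|v^{(j)}\|_2\le 1$, the standard implicit normalization in Banaszczyk's theorem).

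The heart of the proof is a one-step Gaussian-measure lemma: for any convex body $L\subseteq\mathbb R^m$ and unit vector $v$,
\[
\max\bigl\{\gamma_m(L-v),\,\gamma_m(L+v)\bigr\}\ \ge\ \gamma_m(L)\cdot e^{-c/n}
\]
for a constant $c<\log 2$. I would begin from the algebraic identity obtained by completing the square in the Gaussian density,
\[
\frac{\gamma_m(L-v)+\gamma_m(L+v)}{2\,\gamma_m(L)}\ =\ e^{-\|v\|^2/2}\,\mathbb E_Z\!\left[\cosh\langle Z,v\rangle\right],
\]
where $Z$ is drawn from the standard Gaussian conditioned on $L$. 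Bounding $\cosh\ge 1$ gives only the crude per-step loss $\|v\|^2/2\le 1/2$, which accumulates disastrously. To sharpen this to $O(1/n)$, I would exploit the inductive invariant $\gamma_m(L)>1/2$ to show that $Z$ is sub-Gaussian with variance proxy of a constant independent of $n$; combined with a moment-generating-function computation and Banaszczyk's amortization across the $n$ steps (viewing the successive losses as a sub-Gaussian process rather than analyzing each step in isolation), the total loss becomes a constant rather than a constant per step.

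Granting the lemma, choose $\epsilon_k$ at step $k$ as the sign achieving the max, applied with $L=K-x_{k-1}$ and $v=v^{(k)}$. Telescoping yields $\Phi_k\le \Phi_0+kc/n \le 1/n+c < \log 2$ for every $k\le n$, so $\gamma_m(K-x_k)>1/2$. To extract $x_k\in K$ from this, suppose for contradiction that $x_k\notin K$. Strictly separate $x_k$ from the closed convex set $K$ by a hyperplane, which translates to the statement that $K-x_k$ is contained in an open half-space whose closure does not contain the origin. Such a half-space has Gaussian measure strictly less than $1/2$, contradicting $\gamma_m(K-x_k)>1/2$. The hypothesis of $0$-symmetry of $K$ is what makes the hypothesis $\gamma_m(K)\ge 1-1/(2n)$ natural and compatible with the goal "sum lands in $K$" rather than in some dilation $cK$.

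The main obstacle is the sharp one-step lemma. A direct application of log-concavity goes in the wrong direction: Prékopa–Leindler gives $\gamma_m(L-v)\gamma_m(L+v)\le \gamma_m(L)^2$, i.e.\ an upper bound on the product, which tells us that the \emph{worse} of the two shifts is small but says nothing about the better of the two. The $\cosh$ expansion above reverses this, but only yields a per-step loss too large by a factor of $n$. Bridging this gap requires Banaszczyk's genuine geometric insight — a global argument on a sub-Gaussian process rather than a purely local per-step estimate — and adapting it is where essentially all the difficulty lies; once this lemma is in hand, the telescoping and the separation argument above are routine.
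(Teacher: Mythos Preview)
This theorem is not proved in the paper; it is quoted from Banaszczyk as a black box and then applied in Section~\ref{sec:SDP}. There is no paper proof to compare your attempt against.

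On your proposal itself: you correctly identify that the naive per-step bound coming from the $\cosh$ identity loses a factor of $n$, and you are candid that ``bridging this gap \dots\ is where essentially all the difficulty lies.'' But what remains after that concession is a framework, not a proof. The one-step lemma you state,
\[
\max\{\gamma_m(L-v),\gamma_m(L+v)\}\ \ge\ \gamma_m(L)\,e^{-c/n}
\]
for $\|v\|\le 1$ and an absolute constant $c$, is false (take $L$ a half-space of measure just above $1/2$ and $v$ its unit normal: the worse shift drops the measure by a constant, and by symmetry of the two shifts neither is protected). The ``amortization'' you invoke is incompatible with the greedy-telescoping skeleton you have set up: once the per-step inequality is abandoned, the induction on $k$ no longer goes through as written, and you have not replaced it with anything concrete. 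Banaszczyk's actual argument does not track the scalar potential $-\log\gamma_m(K-x_k)$ with a myopic sign choice; it carries a richer invariant --- roughly, a convex set of admissible current positions whose Gaussian measure is maintained at or above $1/2$ after every step provided the vectors are short enough --- and the signs are read off from this structure. The norm hypothesis on the $v^{(j)}$, which the paper's statement omits and which you rightly flag as implicit, is essential precisely in that step and cannot be replaced by the condition $\gamma_m(L)>1/2$ alone.
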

In the application to flow time scheduling (i.e., in the prefix Beck-Fiala and more generally the prefix Koml\'os setting), this theorem is used with a sufficiently
large hypercube.
Here we will define a more involved convex body $K$.

Let us fix an arbitrary small $\delta>0$. We will show that the SDP discrepancy is at most $1$.  
%less than $1+\delta$ for any $\delta>0$. \onote{In the limit does this imply that the SDP discrepancy is at most $1$?}
Let $r$ be an integer parameter that we will specify later. 
Let us replace each coordinate $i = 1,2,\dotsc,m$ by a {\em block} $B_i$ of $r$ coordinates, so that there are $rm$ coordinates in total. For $\ell = 1,2,\dotsc,r$ we use $(i,\ell)$ for the index of the $\ell$th coordinate in block $B_i$.
For each original vector $v^{(j)} \in \R^m$, we create $r$ vectors $v^{(j,\ell)} \in \R^{rm}$ with $\ell = 1,2,\dotsc,r$ as 
\begin{equation*}
    v^{(j,\ell)}_{i,\ell} = v^{(j)}_i \text{ for } i \in \{1,2,\dotsc, m\} \quad \text{ and } \quad v^{(j,\ell)}_{i,\ell'}=0 \text{ for } \ell' \neq \ell \text{ and } i \in \{1,2,\dotsc, m\} \ .
\end{equation*}
For a more visual definition, consider the matrix whose columns are $v^{(1)},v^{(2)},\dotsc,v^{(n)}$.
Let us replace each entry $v^{(j)}_i$ in this matrix by a $r \times r$ diagonal matrix with
$r$ copies of $v^{(j)}_i$ on its diagonal. The columns of the resulting matrix are the vectors
$v^{(j, \ell)}$.
Notice that the non-zero values in $v^{(j, \ell)}$ are the same as in $v^{(j)}$.
Hence, $\|v^{(j,\ell)}\|_2 = \|v^{(j)}\|_2\leq 1$.

For each $i = 1,2,\dotsc,m$ let $K_i$ be the convex body $K_i=\{ x\in \R^{rm} :  \sum_{\ell=1}^r x_{i,\ell}^2 \leq (1+\delta)^2 r\}$.
Let $K = \cap_{i=1}^m K_i$, that is,~$K$ is set of points $x \in \R^{rm}$ such that the sum of squares of the coordinates in each block $B_i$ is at most $(1+\delta)^2 r$.
\begin{lemma}
For large enough $r = O(\delta^{-2} \log mn)$,
the Gaussian measure $\gamma_{rm}(K) \geq 1-1/(2nr)$.
\end{lemma}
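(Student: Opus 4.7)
The plan is to observe that the complement $\mathbb{R}^{rm}\setminus K$ is a union of $m$ ``bad'' events, one per block, and to bound each of them by standard chi-squared concentration. Let $g \in \mathbb{R}^{rm}$ be a standard Gaussian vector. For each block $i$, the quantity $S_i := \sum_{\ell=1}^r g_{i,\ell}^2$ is a chi-squared random variable with $r$ degrees of freedom (here I use that the coordinates of $g$ within a block are independent $N(0,1)$). So
\begin{equation*}
    1 - \gamma_{rm}(K) \;=\; \Pr[\,g \notin K\,] \;=\; \Pr\!\Big[\,\exists\, i : S_i > (1+\delta)^2 r\,\Big] \;\le\; \sum_{i=1}^m \Pr[\,S_i > (1+\delta)^2 r\,]
\end{equation*}
by a union bound over the $m$ blocks.

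Next I would apply the Laurent--Massart tail inequality, which gives $\Pr[S_i \ge r + 2\sqrt{r t} + 2t] \le e^{-t}$ for all $t\ge 0$. Setting $t = c \delta^2 r$ for a small absolute constant $c > 0$, the right-hand side becomes $r + 2\sqrt{c}\,\delta\, r + 2c\,\delta^2 r$, which for sufficiently small $c$ is at most $r(1+\delta)^2 = r + 2\delta r + \delta^2 r$. Hence each block contributes at most $e^{-c\delta^2 r}$ to the union bound, and the total failure probability is bounded by $m\, e^{-c\delta^2 r}$.

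It remains to choose $r$ so that $m\, e^{-c\delta^2 r} \le 1/(2nr)$, equivalently $c \delta^2 r \ge \log(2nrm)$. Taking $r = C \delta^{-2} \log(mn)$ for a sufficiently large absolute constant $C$ makes the right-hand side $\log(2nrm) = \log(2mn) + \log r$ easily dominated by $c\delta^2 r$, since the extra $\log r = O(\log\log(mn) + \log \delta^{-1})$ is absorbed by enlarging $C$. Thus $\gamma_{rm}(K) \ge 1 - 1/(2nr)$ as claimed.

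There is no real obstacle here: the argument is a union bound plus a standard Gaussian (chi-squared) tail estimate, and the product structure of $K$ across blocks makes the independence between blocks immediate. The only mildly delicate point is absorbing the extra $\log r$ factor into the choice $r = O(\delta^{-2}\log mn)$, which is handled by enlarging the constant in $O(\cdot)$ since $\log r = O(\log\log mn + \log \delta^{-1})$.
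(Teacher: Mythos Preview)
Your proposal is correct and follows essentially the same approach as the paper: a union bound over the $m$ blocks reduces the question to a chi-squared upper-tail estimate, and then one chooses $r$ so that $m\,e^{-c\delta^2 r} \le 1/(2nr)$. The only cosmetic difference is the specific concentration tool cited---you invoke the Laurent--Massart inequality, while the paper quotes the generic sub-exponential Bernstein bound from Vershynin---but both yield the same $e^{-\Theta(\delta^2 r)}$ tail for $\sum_\ell g_\ell^2$.
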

\begin{proof}
As $K = \cap_{i=1}^m K_i$ and $\gamma_{rm} (\mathbb R^{rm} \setminus K) = 1 - \gamma_{rm}(K)$, by the union bound we have \[1-\gamma_{rm}(K) \leq \sum_{i=1}^m (1-\gamma_{rm}(K_i))\] and thus it suffices to show that $\gamma_{rm}(K_i) \geq 1-1/(2nrm)$.

Let $g_1,\ldots,g_r$ be iid $N(0,1)$.
Then as $g_\ell^2$ has sub-exponential tails and $\E[g_\ell^2]=1,\E[g_\ell^4]=3$, by standard tail bounds for sum of independent sub-exponential random variables (see e.g.,~\cite{vershynin_2018}, Corollary 2.8.3)
there is a universal constant $c$ such that for any $t\geq 0$,
\begin{equation}
    \label{eq:conc-subexp}
 \Pr\Big[\sum_{\ell=1}^r g_\ell^2 - r \geq t r \Big]  \leq  \exp(- cr \min (t,t^2)) \ .
\end{equation} 
As the Gaussian measure of $K_i$ is exactly the probability that $\sum_{\ell=1}^r g_\ell^2 \leq (1+\delta)^2 r$,  
setting $t=2\delta + \delta^2 $ in \eqref{eq:conc-subexp} and choosing $r$ large enough so that $\exp(-cr \delta^2) \leq 1/(2nrm)$ gives that $\gamma_{rm}(K_i) \geq 1-1/(2nrm)$ and hence implies the claimed result.
\end{proof}
Let $\varepsilon_{j,\ell} \in \{-1,1\}$ be the signs obtained by applying Banaszczyk's prefix discrepancy theorem to the vectors
$v^{(1,1)},\ldots,v^{(1,r)}, v^{(2,1)},\ldots,v^{(2,r)},\ldots, v^{(n,1)},\ldots,v^{(n,r)}$ and the convex body $K$.
For $j = 1,2,\dotsc,n$ let us define the vector $w_j \in \R^r$ with entries
\begin{equation*}
    (w_j)_\ell = r^{-1/2} \epsilon_{j,\ell}
\end{equation*}
for $\ell = 1,2,\dotsc,r$. Clearly $\|w_j\|_2=1$ as each coordinate is $\pm r^{-1/2}$.
We now show that this is a good SDP solution.
\begin{lemma}
The vectors $w_1,w_2,\dotsc,w_n$ are a SDP solution with prefix discrepancy at most $C=1+\delta$.
\end{lemma}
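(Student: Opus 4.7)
The plan is to unpack the definitions and read off the SDP discrepancy bound directly from Banaszczyk's theorem applied to the enlarged instance. The crucial observation is that the ordering of vectors $v^{(1,1)},\dots,v^{(1,r)},v^{(2,1)},\dots,v^{(n,r)}$ groups all $r$ copies associated to a single original vector $v^{(j)}$ consecutively. Hence the prefix of length $kr$ is an admissible prefix for Banaszczyk's conclusion, and it corresponds in a clean way to the quantity we want to bound.

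Concretely, fix a row $i \in \{1,\dots,m\}$ and a prefix length $k \in \{1,\dots,n\}$. First I would consider the partial sum
\begin{equation*}
 S^{(k)} \;=\; \sum_{j=1}^{k}\sum_{\ell=1}^{r} \varepsilon_{j,\ell}\, v^{(j,\ell)} \;\in\; \R^{rm},
\end{equation*}
which by Banaszczyk's theorem lies in $K \subseteq K_i$. By the definition of $v^{(j,\ell)}$, the coordinate of $S^{(k)}$ indexed by $(i,\ell)$ is exactly $\sum_{j=1}^{k} \varepsilon_{j,\ell}\, v^{(j)}_i$, since $v^{(j,\ell)}$ is zero on every block except $B_i$ at position $\ell$ (for that particular $\ell$). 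Therefore the containment $S^{(k)} \in K_i$ translates into the single inequality
\begin{equation*}
 \sum_{\ell=1}^{r} \Bigl(\sum_{j=1}^{k} \varepsilon_{j,\ell}\, v^{(j)}_i\Bigr)^{2} \;\le\; (1+\delta)^{2}\, r.
\end{equation*}

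Next I would compute the SDP prefix sum directly. Since $(w_j)_\ell = r^{-1/2}\varepsilon_{j,\ell}$, the vector $\sum_{j=1}^{k} v^{(j)}_i w_j \in \R^{r}$ has $\ell$-th coordinate equal to $r^{-1/2} \sum_{j=1}^{k} \varepsilon_{j,\ell} v^{(j)}_i$. Taking squared $\ell_2$-norm and substituting the bound above gives
\begin{equation*}
 \Bigl\|\sum_{j=1}^{k} v^{(j)}_i w_j\Bigr\|_{2}^{2}
 \;=\; \frac{1}{r}\sum_{\ell=1}^{r}\Bigl(\sum_{j=1}^{k}\varepsilon_{j,\ell} v^{(j)}_i\Bigr)^{2}
 \;\le\; (1+\delta)^{2}.
\end{equation*}
Taking square roots yields the SDP discrepancy bound of $1+\delta$ on this row and prefix, and since $i$ and $k$ were arbitrary the lemma follows.

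There is no real obstacle here; the content of the proof is the correspondence between (i) the block structure of the embedded vectors, (ii) the shape of the convex body $K_i$, and (iii) the definition of $(w_j)_\ell$ as scaled signs. The one thing to be a bit careful about is to invoke Banaszczyk's theorem at the specific prefix length $kr$ (which sits exactly at the boundary between the groups for $j$ and $j+1$); any other prefix would only give partial information about the block sums. Once this alignment is noted the inequality follows by a one-line calculation, and letting $\delta \downarrow 0$ in the choice of $r = O(\delta^{-2}\log mn)$ shows that the SDP discrepancy is at most $1$ (approached arbitrarily closely), establishing the theorem.
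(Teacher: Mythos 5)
Your proof is correct and follows essentially the same route as the paper: apply Banaszczyk at the prefix of length $kr$, project onto the block $B_i$ to invoke the defining inequality of $K_i$, and rewrite that inequality as the squared $\ell_2$-norm of $\sum_{j\le k} v^{(j)}_i w_j$. The one presentational difference is that the paper packages $\sum_{\ell} \varepsilon_{j,\ell} v^{(j,\ell)}$ as an intermediate vector $u^{(j)}$, while you compute coordinates directly and explicitly flag the alignment of the prefix boundary $kr$ with the group structure, which is a worthwhile clarification of the same argument.
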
 
\begin{proof}
For $j=1,2,\dotsc,n$ let $u^{(j)} = \sum_{\ell = 1}^r \varepsilon_{j,\ell} v^{(j,\ell)} \in \R^{rm}$. By Banaszczyk's theorem  we have that $\sum_{j=1}^k u^{(j)} \in K$ for each $k = 1,2,\dotsc,n$.

Further, by the construction of the vectors $v^{(j,\ell)}$ we have the following property: for every $i=1,2,\dotsc,m$ and $j=1,2,\dotsc,n$, the vector $u^{(j)}$ restricted to the coordinates in block $B_i$ is exactly $r^{1/2} v^{(j)}_i w_j$. This follow as for each $\ell=1,2,\dotsc,r$ we have $u^{(j)}_{i,\ell}=  \eps_{j,\ell} v^{(j,\ell)}_{i,\ell} = v^{(j)}_i \eps_{j,\ell} $.

As $\sum_{j=1}^k u^{(j)} \in K$ for each $k=1,2,\dotsc,n$, and $K = \cap_{i = 1}^m K_i$, restricting to the coordinates in block $B_i$ gives that  $ \sum_{j=1}^k r^{1/2} v^{(j)}_i w_j$ has squared $\ell_2$-norm at most $(1+\delta)^2 r$ and hence that $\|  \sum_{j=1}^k v^{(j)}_i w_j \|_2 \leq 1+\delta$.
\end{proof}

\bibliographystyle{alpha}
\bibliography{references}

%\appendix

%\input{totalflowappendix}

\end{document}